\newcommand{\NP}{\ensuremath{\mathcal{NP}}}
\def\O(#1){\ensuremath{\mathcal{O}(#1)}}
\journal{noname}
\newtheorem{theorem}{Theorem}
\newtheorem{lemma}{Lemma}[section]
\newtheorem{corollary}{Corollary}
\newtheorem{definition}{Definition}
\def\squareforqed{\hbox{\rlap{$\sqcap$}$\sqcup$}}
\def\qed{\ifmmode\squareforqed\else{\unskip\nobreak\hfil
\penalty50\hskip1em\null\nobreak\hfil\squareforqed
\parfillskip=0pt\finalhyphendemerits=0\endgraf}\fi}
\newif\if@restonecol
\definecolor{commentgreen}{rgb}{0, 0.5, 0}
\newcommand{\algcom}[2][0.7]{%
  \hfill\makebox[#1\linewidth][l]{\color{commentgreen}\textit{// #2}}%
}
\begin{document}

\begin{frontmatter}

\title{T-Shape Visibility Representations of 1-Planar Graphs\tnoteref{thanks}}

\author{Franz J.\ Brandenburg}
\ead{brandenb@informatik.uni-passau.de}
\address{University of Passau, 94030 Passau, Germany.}
\tnotetext[thanks]{Supported in part by the Deutsche
Forschungsgemeinschaft
  (DFG), grant Br835/}

\begin{abstract}

A shape visibility representation displays a graph so that each
vertex is represented by an orthogonal polygon  of a particular
shape and for each edge there is a horizontal or vertical line of
sight between the polygons assigned to its endvertices.  Special
shapes are rectangles, \textsf{L}, \textsf{T}, \textsf{E} and
\textsf{H}-shapes, and caterpillars. A flat rectangle is a
horizontal bar of height $\epsilon>0$. A graph is 1-planar if there
is a drawing in the plane such that each edge is crossed at most
once and is IC-planar if in addition no two crossing edges share a
vertex.

We show that every IC-planar graph has a flat rectangle visibility
representation  and that every 1-planar graph has a \textsf{T}-shape
visibility representation. The representations use quadratic area
and can be computed in linear time from a given embedding.

\end{abstract}

\begin{keyword}
Graph Drawing  \sep visibility representations \sep orthogonal
polygons \sep beyond-planar graphs
\end{keyword}

\end{frontmatter}

\section{Introduction}

A graph is commonly visualized by a drawing in the plane or on
another surface. In return, properties of   drawings are used to
define properties of graphs. Planar graphs are the most prominent
example. Also, the genus of a graph and $k$-planar graphs are
defined in this way, where a graph is $k$-planar for some $k \geq 0$
if there is a drawing in the plane such that each edge is crossed at
most $k$ times.

Planar graphs admit a different visualization by  bar visibility
representations. A \emph{bar visibility representation} consists of
a set of non-intersecting horizontal line segments, called bars, and
vertical lines of sight between the bars. We assume that the lines
of sight have width $\epsilon > 0$ and also that the bars have
height at least $\epsilon$. Each bar represents a vertex of a graph
and there is an edge if (or if and only if) there is a line of sight
between the bars of the endvertices. Hence, there is a bijection
between vertices and bars and a correspondence between edges and
lines of sight that is one-to-one in the \emph{weak} or
``if''-version and also onto in the \emph{strong} or ``if and only
if''-version. A graph is a \emph{bar visibility graph} if it admits
a bar visibility representation. Other graph classes are defined
analogously.

Bar visibility representations and graphs were intensively studied
in the 1980s and the representations of planar graphs were
discovered independently multiple times \cite{dhlm-rpg-83,
ow-grild-78,rt-rplbopg-86,tt-vrpg-86,w-cbg-85}. Note that strong
visibility with  lines of sight of width zero excludes $K_{2,3}$ and
some 3-connected planar graphs \cite{a-rvg-92} and implies an
NP-hard recognition problem \cite{a-rvg-92}. Obviously, every weak
visibility graph is an induced subgraph of a strong visibility graph
with lines of sight of width zero or $\epsilon >0$.

In the late 1990s  visibility representations were generalized to
represent non-planar graphs. The approach by Dean et
al.~\cite{DEGLST-bkvg-07} admits semi-transparent bars and lines of
sight that  traverse up to $k$ other bars. In other words, an edge
can cross up to $k$ vertices. Some facts are known about   bar
$k$-visibility graphs: for $k=1$ each graph  of size $n$ has at most
$6n-20$ edges  and the bound can be achieved for all $n \geq 8$
\cite{DEGLST-bkvg-07}. In consequence, $K_8$ is the largest complete
bar $1$-visibility graph.  A graph has thickness $k$ if it can be
decomposed into $k$ planar graphs. However, bar 1-visibility graphs
are incomparable to thickness two (or biplanar) graphs, since there
are thickness two graphs with $6n-12$ edges which cannot be bar
1-visibility graphs and conversely   there are bar 1-visibility
graphs with thickness three \cite{fm-pbkvg-08}. Bar 1-visibility
graphs have an NP-hard \cite{bhkn-bvg-15} recognition problem. Last
but not least, every 1-planar graph has a bar 1-visibility
representation which uses only quadratic area and can be specialized
so that a line of sight crosses at most one bar and each bar is
crossed at most once \cite{b-vr1pg-14}. The inclusion relation
between 1-planar and bar 1-visibility graphs was obtained
independently by Evans et al.~\cite{ekllmw-b1vg-14}

Rectangle visibility representations of graphs were introduced by
Hutchinson et al.~\cite{hsv-rstg-99}. Here, each vertex  is
represented by an axis-aligned rectangle and there are horizontal
and vertical lines of sight for the edges, which cannot penetrate
rectangles. Hutchinson et al. studied the strong  version of
visibility. They proved a density of $6n-20$ which is tight for all
$n \geq 8$. In consequence, $K_8$ is the largest rectangle
visibility graph. Rectangle visibility graphs have thickness two
whereas it is unknown whether they have geometric thickness two
\cite{hsv-rstg-99}, which requires a decomposition into two
straight-line planar graphs. The recognition problem for weak
rectangle visibility graphs is \NP-hard \cite{s-RVG-NP-96}.

We generalize rectangle visibility representations  to
$\sigma$-\emph{shape visibility representations}. A \emph{shape}
$\sigma$ is an orthogonal drawing of a ternary tree $\tau$, which is
expanded to an \emph{orthogonal polygon} in a $\sigma$-shape
visibility representation. Thereby, each   edge of $\tau$ is
expanded to a rectangle of  width  $w>0$ and  height $h>0$. The
images of the
 vertices are similar and differ only in the length and width of the
 horizontal and vertical  pieces of the polygon. In particular, rectangle
visibility is \textsf{I}-shape or ``--''-shape visibility. Since
visibility representations can be reflected or rotated by multiples
of 90 degrees we treat the respective shapes as equivalent and shall
identify them. For example, any single element of the set
$\{\lfloor,\rfloor, \lceil, \rceil\}$ can be used for an
\textsf{L}-shape. However, a set of shapes must be used if the
vertices shall have different shapes,
 e.g., $\{ \lfloor,\rfloor, \lceil, \rceil\}$ for  \textsf{L}-shapes in \cite{lm-Lvis-16}.
Other common shapes are    \textsf{H}, \textsf{F} or \textsf{E}.
 A \emph{rake} is a generalized \textsf{E} with
many teeth that are directed upwards, and a \emph{caterpillar} is a
two-sided rake with a horizontal path and vertical lines from the
path to the leaves above and below. The number of teeth or vertical
lines is reflected by the vertex complexity of ortho-polygon
visibility representation \cite{ddelmmw-ovreg-16}.

 In a \emph{flat rectangle visibility representation}
the rectangles have height $\epsilon >0$ where $\epsilon$ is the
width of a sight of line  \cite{tt-vrpg-86}. Then the vertices are
represented by bars, as in bar visibility representations, such that
two bars at the same level can see one another by a horizontal  line
of sight if there is no third bar in between. Moreover, a horizontal
and vertical  line of sight may cross, which is not allowed in the
flat visibility representations by Biedl \cite{b-sdogs-11}.

Shape visibility representations  have been introduced by   Di
Giacomo et al.~\cite{ddelmmw-ovreg-16}. They use caterpillars as
shapes in their results. \textsf{L}-visibility representations have
been introduced by Evans et al.~\cite{elm-svrL-15} using any shape
from the set $\{ \lfloor, \rfloor, \lceil, \rceil\}$. This approach
was adopted  by Liotta and Montecchiani \cite{lm-Lvis-16} for the
representation of IC-planar
graphs.\\

In this work, we prove the following:

\begin{theorem} \label{thm:IC-RVG}
Every n-vertex IC-planar graph $G$ admits a flat rectangle
visibility representation in $O(n^2)$ area, which can be computed in
linear time from a given IC-planar embedding of $G$.
\end{theorem}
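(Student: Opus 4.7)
The plan is to reduce the theorem to a standard planar bar visibility construction applied to a suitable auxiliary planar graph, and then to extend the resulting representation by introducing one horizontal sight line per crossing.

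First, from the given IC-planar embedding, extract the set $X$ of crossings. By IC-planarity each crossing involves four distinct vertices and no vertex participates in more than one crossing, so the set of crossed edges forms a subgraph whose components are isolated pairs. For each crossing pair $\{(a,c),(b,d)\} \in X$ I would designate one edge (say $(a,c)$) as a horizontal-sight edge and the other as a vertical-sight edge; let $H$ be the set of horizontal-sight edges. Removing $H$ from $G$ yields a plane graph $G_V$ that inherits the IC-planar embedding and in which every remaining edge is to be realized by a vertical sight.

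Next, contract each edge of $H$ in the embedding of $G_V$ to obtain an auxiliary planar graph $G_V^{*}$. Contractibility is preserved because the two endpoints of each $(a,c) \in H$ share the two faces of $G_V$ that formerly lay on either side of the crossing, and because $H$ is a matching on pairwise disjoint vertex-pairs. Apply a standard linear-time planar bar visibility algorithm (e.g.\ the Tamassia--Tollis construction) to $G_V^{*}$ to obtain a weak bar visibility representation on an $O(n) \times O(n)$ integer grid. Viewed as a flat rectangle representation, every edge of $G_V^{*}$ is already realized by a vertical sight. Now expand each contracted vertex $v_{ac}$ into two abutting bars of height $\epsilon$, one for $a$ and one for $c$, both at the $y$-coordinate of $v_{ac}$ and together spanning the horizontal extent originally assigned to $v_{ac}$; order them according to the cyclic neighbor sequence inherited from the IC-planar embedding so that every vertical sight formerly incident to $v_{ac}$ is reattached to either $a$ or $c$ in accordance with the original planar adjacency in $G_V$. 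A horizontal line of sight through the $\epsilon$-gap between $a$ and $c$ then realizes the edge $(a,c)$, while the vertical sight realizing $(b,d)$ crosses this horizontal sight at precisely the position where the two edges crossed in the IC-planar drawing.

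The main obstacle is ensuring that the split in the expansion step respects all previously established vertical sights, i.e.\ that the neighbors of $v_{ac}$ in $G_V^{*}$ reattach consistently to $a$ and $c$ without a third bar blocking the new horizontal sight. This is exactly where IC-planarity is essential: because no vertex is incident to two crossings, the neighbors of $v_{ac}$ are linearly ordered along its boundary in $G_V^{*}$ and the IC-planar embedding induces a clean partition of these neighbors into those that were adjacent to $a$ in $G_V$ and those that were adjacent to $c$; routing the vertical sights accordingly preserves the representation and keeps the strip between $a$ and $c$ clear. All steps run in linear time in the size of the given embedding, and the final flat rectangle representation inherits the $O(n) \times O(n)$ grid, yielding the claimed area bound.
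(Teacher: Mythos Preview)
Your contraction step breaks down. After deleting the horizontal-sight edge $(a,c)$ to form $G_V$, the edge $(b,d)$ remains in $G_V$ as an uncrossed planar edge, and the vertices $a$ and $c$ now lie on \emph{opposite} sides of it. Your claim that ``the two endpoints of each $(a,c)\in H$ share the two faces of $G_V$ that formerly lay on either side of the crossing'' is therefore false: one of those two faces has $a$ on its boundary and the other has $c$, but neither contains both. Identifying $a$ with $c$ in $G_V$ does not in general yield a planar graph (for a concrete obstruction, take any kite inside a triangulated IC-planar graph: the edge $\{b,d\}$ together with the two boundary triangles already separates $a$ from $c$), so you cannot feed $G_V^{*}$ to a planar bar-visibility algorithm. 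Even ignoring planarity, your expansion gives no reason for the vertical sight realizing $(b,d)$ to pass through the gap between the bars of $a$ and $c$; $b$ and $d$ may well end up on the same side of the level of $v_{ac}$.

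The paper sidesteps this by contracting the \emph{whole} kite---all four vertices $a,b,c,d$---to a single vertex, obtaining a planar graph $G_{\bullet}$; IC-planarity guarantees these contractions are vertex-disjoint and hence independent. An $st$-numbering of $G_{\bullet}$ is then lifted back to $G$: for each kite one deliberately chooses which vertex is bottom, which is top, and which two are the left and right ends, and the two ends are forced onto the same $y$-level by a simple relabelling ($d(v)=\delta(b)+\delta(d)$ for the ends, $d(v)=2\delta(v)$ otherwise). The standard visibility drawer applied to the planar skeleton $G_{\square}$ then leaves each kite boundary as an empty rhomboid with a one-unit horizontal gap between the bars of the two ends, into which one horizontal and one vertical line of sight can be inserted. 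Contracting only one diagonal, as you attempt, gives no control over where $b$ and $d$ land relative to the merged vertex, which is precisely why the full-kite contraction is needed.
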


\begin{theorem} \label{thm:1-cat}
Every n-vertex 1-planar graph $G$ admits a  \textsf{T}-shape
visibility representation in $O(n^2)$ area, which can be computed in
linear time from a given 1-planar embedding of $G$.
\end{theorem}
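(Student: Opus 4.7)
The overall plan is to bootstrap from a bar visibility representation of a planar subgraph of $G$ and use the vertical stem of each T-shape to realize the edges that were discarded at crossings. Given a 1-planar embedding of $G$, I first pair up crossing edges: for every pair $(u,v),(w,x)$ that cross, I designate one edge as \emph{horizontal} (to be realized by a direct horizontal line of sight between two bars) and the other as \emph{vertical} (to be realized through stems). Removing the vertical edges yields a planar subgraph $G_p$ that inherits a plane embedding from $G$. The choice of which edge of each pair is vertical should be guided by the local crossing structure — in an optimal 1-planar embedding every crossing sits inside a kite (a 4-cycle of planar edges), which constrains the possible placements and will be exploited in the next step.

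Next, I compute a bar visibility representation of $G_p$ in $O(n^2)$ area using a classical algorithm such as Tamassia--Tollis \cite{tt-vrpg-86} or Rosenstiehl--Tarjan \cite{rt-rplbopg-86}, after augmenting $G_p$ to a plane graph with sufficient connectivity so that the construction produces aligned bars and vertical channels with controlled clearance. The augmentation should be chosen so that, for every vertical edge $(w,x)$, the bars of $w$ and $x$ end up at different heights and there is a vertical strip joining their horizontal ranges that can be cleared of intervening bars — the surrounding kite vertices act as a natural fence that makes such a strip available.

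In the third stage I convert each bar into a T-shape by attaching a single vertical stem. For a vertical edge $(w,x)$ with $w$ above $x$, I drop a downward stem from $w$'s bar (or raise an upward stem from $x$) into the cleared strip until a vertical line of sight reaches the other endpoint. When a vertex is incident to several vertical edges, I merge their stems into a single sufficiently wide vertical piece from which all targets are visible at distinct $x$-coordinates, which is permitted by the $\epsilon$-width sight model. Since the stems only extend into channels that were already present in the bar visibility representation, the total area remains $O(n^2)$ and the whole construction can be performed in linear time once the bar visibility representation and the pairing have been computed.

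The main obstacle is ensuring in Stage~2 that each vertical edge's endpoints genuinely admit a clean vertical channel in the bar visibility representation, and simultaneously that the multiple vertical edges at a single vertex can be served by \emph{one} stem rather than requiring, say, an E- or H-shape. I expect the proof to lean on the kite structure of 1-planar embeddings to route stems inside kite faces and to bound the number of independent directions a stem must serve at any vertex. A clean way to do this is to augment $G_p$ with dummy edges that force the desired relative placements of kites, carry out the bar visibility construction on the augmented planar graph, and then show that deleting the dummies leaves precisely the vertical corridors that the stems need.
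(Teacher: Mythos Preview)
Your outline has the right high-level shape---planarize, draw the planar part with bars, then use stems for the discarded crossing edges---but the mechanism you propose for the stems is not what makes a single \textsf{T} per vertex sufficient, and I do not see how to close the gap with the tools you describe.

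The difficulty sits exactly where you flag it. A vertex $v$ of a general 1-planar graph can be incident to arbitrarily many crossing edges (this is precisely what separates the 1-planar case from the IC-planar case of Theorem~\ref{thm:IC-RVG}), so your hope to ``bound the number of independent directions a stem must serve'' will not pan out. In your scheme each such edge becomes a ``vertical'' edge, and you want one stem at $v$ to realize all of them by \emph{vertical} lines of sight. That would require the other endpoints $c_1,\dots,c_k$ all to lie in a single vertical strip, each with a clear column to the stem, and the stem to span that strip without obstructing any existing visibility. Nothing in a generic bar-visibility layout of $G_p$---even after augmenting inside kites---forces the $c_i$ to line up this way, and ``merge into a sufficiently wide piece'' is not an argument that they do.

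The paper resolves this by using the stem in the opposite direction. It removes \emph{both} edges at every crossing, so the planar remainder $G_{\square}$ has a quadrangle for each former crossing, and it then computes the bar visibility representation with respect to an st-numbering refined from a \emph{leftish canonical ordering}. That ordering guarantees (Lemma~\ref{lem:bitonic}) that the neighbours of $v$ above the current contour occur in a bitonic sequence, hence at pairwise distinct $y$-levels with no foreign pylon in between. The single pylon of $v$ is planted in the trapezoidal quadrangle whose top vertex is highest, and every missing edge $\{v,c_i\}$ is drawn as a \emph{horizontal} line of sight from that pylon to the bar of $c_i$ (Lemma~\ref{lem:pylonSee}); the partner edge of each crossing is handled locally inside its quadrangle by a short bar extension. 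So one pylon at one $x$-coordinate reaches all the $c_i$ because they are separated in $y$, not in $x$. This is the idea your proposal is missing: the stem buys many new horizontal sights rather than one wide vertical one, and the leftish canonical ordering is exactly the device that certifies those horizontal sights are unobstructed.
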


The first theorem improves upon a result by Liotta and Montecchiani
\cite{lm-Lvis-16} who use the set $\{ \lfloor, \rfloor, \lceil,
\rceil\}$ as \textsf{L}-shapes. Our result is also a variation of
the bar 1-visibility representation of 1-planar graphs by
Brandenburg \cite{b-vr1pg-14} such that an edge-bar crossing is
substituted by a crossing of a vertical and a horizontal line of
sight.

The second theorem extends a recent result  by Di Giacomo et
al.~\cite{ddelmmw-ovreg-16} and contrasts a result by Biedl et
al.~\cite{blm-vrnpg-16}. However, there are different settings. We
operate in the \emph{variable embedding setting} and admit changing
the embedding. In the other works an \emph{embedding-preserving
setting} is used which enforces a coincidence of the embedding of a
1-planar drawing and a visibility representation.
For the first theorem, we reroute an edge in each $B$-configuration,
as depicted in Figs.~\ref{fig:kite} (b) and (c). The change of the
embedding can be undone with a little effort. However, the full
power of horizontal and vertical lines of sight is used for the
second theorem. Here some crossing edges undergo a separate
treatment and substantially change the embedding. In contrast,
 Di Giacomo et al. have shown that every 1-planar graph
admits a caterpillar visibility representation  and that there are
2-connected 1-planar graphs $G_n$ that need rakes of arbitrary size
if the embedding is preserved.  Biedl et al.~\cite{blm-vrnpg-16}
proved that there is no rectangle visibility representation of $K_6$
that preserves a given 1-planar embedding. However, the graphs $G_n$
and $K_6$ have a rectangle visibility representation (since $K_6$
and the components of $G_n$ are subgraphs of $K_8$).

The paper is organized as follows: In Sect. \ref{sect:prelim} we
recall basic notions and facts on 1-planar graphs and we consider
vertex numberings of planar graphs. We proof Theorem
\ref{thm:IC-RVG} in Sect. \ref{sect:proof1} and Theorem
\ref{thm:1-cat} in Sect.~\ref{sect:proof2} and conclude with general
properties of shape visibility graphs in Sect.~\ref{sect:general}.

\section{Preliminaries} \label{sect:prelim}

We consider simple undirected graphs $G = (V, E)$ with
 a finite set of
vertices $V$ of size $n$ and a finite set of undirected edges $E$.
 It is assumed that the  graphs is 2-connected, since components can be
treated separately or they can be connected by further planar edges.
A \emph{drawing} maps the vertices of a graph to distinct points in
the plane and each edge is mapped to a Jordan arc between the
endpoints. Our drawings are   simple  so that two edges have at most
one point in common, which is either a common endvertex or a
crossing point. A drawing is planar  if edges do not cross and
1-planar  if each edge is crossed at most once. Moreover, in an
IC-planar drawing  each vertex is incident to at most one crossing
edge. A graph is called \emph{planar} (\emph{1-planar, IC-planar})
if it admits a respective drawing. A planar drawing partitions the
plane into topologically connected regions, called faces, whose
boundary consists of edges and edge segments and is specified by a
cyclic sequence of  vertices and crossing points. The unbounded
region is called the outer face. An \emph{embedding}
$\mathcal{E}(G)$ of a graph $G$ is an equivalence class of drawings
of $G$ with the same set of faces. For an algorithmic treatment, we
use the embedding of a planarization   of $G$ which is obtained by
treating the crossing points as  dummy vertices of degree four. An
embedded planar graph  is specified by a  rotation system, which is
the cyclic list of all neighbors or incident edges at each vertex in
clockwise order.

1-planar graphs are the most important class of so-called
beyond-planar graphs. Beyond-planarity comprises graph classes that
extend the planar graphs and are defined by specific restrictions of
crossings. 1-planar graphs were studied first by Ringel
\cite{ringel-65} who showed that they are at most 7-colorable. In
fact, 1-planar graphs are 6-colorable \cite{b-np6ct-95}. Bodendiek
et al.~\cite{bsw-bs-83, bsw-1og-84} observed that 1-planar graphs of
size $n$ have at most $4n-8$ edges and that this bound is tight for
$n=8$ and all $n \geq 10$. This fact was discovered independently in
many works. In consequence, an embedding has linear size and can be
treated   in linear time. \emph{IC-planar} (independent crossing
planar) graphs  are an important special case \cite{a-cnircn-08}. An
IC-planar graph has at most $3.25n-6$ edges \cite{ks-cpgIC-10} and
the bound is tight. In between are NIC-planar graphs
\cite{zl-spgic-13} which are defined by 1-planar drawings in which
two pairs of crossing edges share at most one vertex. Their density
is at most $3.6(n-2)$. 1-planar, NIC-planar, and IC-planar graphs
have some properties in common: First, there is a difference between
densest and sparsest graphs. A sparsest graph cannot be augmented by
another edge and has as few edges as possible whereas a densest
graph has as many edges as possible. It is known that there are
sparse 1-planar graphs with $\frac{45}{17}n- \frac{84}{17}$ edges
\cite{begghr-odm1p-13}, sparse NIC-planar graphs with $3.2(n-2)$
\cite{bbhnr-NIC-16} edges and sparse IC-planar graphs with $3n-4$
edges \cite{bbhnr-NIC-16}. The NP-hardness of the recognition
problems was discovered independently multiple times
\cite{GB-AGEFCE-07, km-mo1ih-13, abgr-1prs-15,bbhnr-NIC-16,
bdeklm-IC-16} and holds even if the graphs are 3-connected and are
given with a rotation system.
On the other hand, triangulated graphs can be recognized in cubic
time \cite{cgp-rh4mg-06,b-4m1pg-15}. A \emph{triangulated graph}
admits a drawing so that all faces are triangles. Then all pairs of
crossing edges induce $K_4$ as a subgraph.

The most remarkable distinction between IC-planar and NIC-planar
graphs is their relationship to RAC graphs. A graph is RAC (right
angle crossing) \cite{del-dgrac-11} if it admits a straight-line
drawing such that edges cross at a right angle. RAC graphs have at
most $4n-10$ edges, and if they meet the upper bound, then they are
1-planar \cite{el-racg1p-13}. In contrast, there are 1-planar graphs
that are not RAC and RAC graphs that are not 1-planar
\cite{del-dgrac-11}. Hence, 1-planar graphs and RAC graphs are
incomparable. Recently, Brandenburg et al.~\cite{bdeklm-IC-16}
showed that every IC-planar graph is a RAC graph and Bachmaier et
al.~\cite{bbhnr-NIC-16} proved that RAC graphs and NIC-planar graphs
are incomparable.

\section{Planar and 1-Planar Graphs}

For our algorithms we use two tools: triangulated 1-planar
embeddings and an st-numbering. We need the following versions of a
given 1-planar graph $G$:  $G_{\boxtimes}, G_{\boxplus},
G_{\square}$ and $G_{\bullet}$. Each version is obtained  from an
embedding $\mathcal{E}(G)$ and   inherits the embedding. Graphs
$G_{\boxtimes}$ and $G_{\boxplus}$ are supergraphs of $G$ which
coincide on 3-connected graphs, $G_{\boxplus}$, $G_{\square}$ and
$G_{\bullet}$ admit multi-edges, and $G_{\square}$ and $G_{\bullet}$
are planar.

First, augment the embedding $\mathcal{E}(G)$  by as many planar
edges as possible and thereby obtain a  planar maximal embedding
  $\mathcal{E}(G_{\boxtimes})$ of $G_{\boxtimes}$ \cite{abk-sld3c-13}. Then the
endvertices of each pair of crossing edges induce $K_4$. Each such
$K_4$ should be embedded as a kite with the crossing point inside
the boundary of the 4-cycle of the endvertices and no other vertex
inside this boundary, see Fig.~\ref{fig:kite}(a). Otherwise, there
are B- or W-\emph{configurations} \cite{t-rdg-88}, as shown in
Figs.~\ref{fig:kite} (b) and (d) or there is a separation pair as in
Fig.~\ref{fig:kite}(e), where the inner components are contracted to
a single vertex. B-configurations can be removed by changing the
embedding. Therefore, choose the other face next to the edge
$\{a,b\}$ between the vertices of a separation pair as outer face
and reroute   $\{a,b\}$, as illustrated in Fig.~\ref{fig:kite}(c),
or  flip the component. Thereafter we  add further planar edges if
possible.  For example, in Fig.~\ref{fig:kite}(c) one may connect
$x$ with another vertex by a planar edge. Then at most one
W-configurations remains in the outer face if the graphs are
3-connected \cite{abk-sld3c-13}. If the graph is 3-connected, then
we take the obtained embedding as a \emph{normal form}
\cite{abk-sld3c-13}. It corresponds to a triangulation of the
planarization with crossing points as vertices of degree four.
Otherwise, there are separation pairs and pairs of crossing edges
that separate the components, as sketched in Fig.~\ref{fig:separate}
and shown in Fig.~\ref{fig:XWdouble}.

\begin{figure}
  \centering
  \subfigure[ ]{
    \parbox[b]{1.8cm}{%
      \centering
      \includegraphics[scale=0.45]{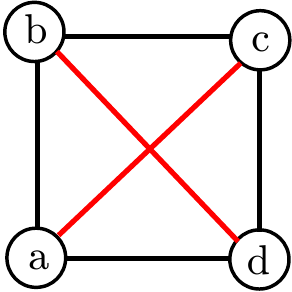}
    }
    \label{fig:kite1}
  }
  \hfil
  \subfigure[ ]{
    \parbox[b]{1.8cm}{%
      \centering
      \includegraphics[scale=0.45]{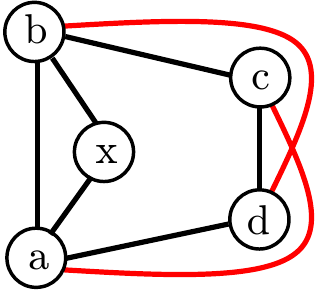}
    }
    \label{fig:B-config}
  }
  \hfil
    \subfigure[ ]{
    \parbox[b]{1.8cm}{%
      \centering
      \includegraphics[scale=0.45]{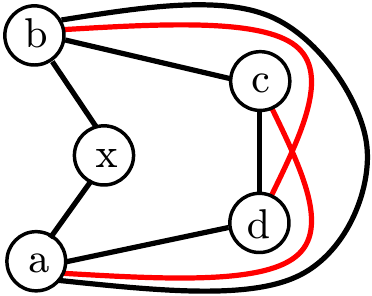}
    }
    \label{fig:W-config}
  }
  \hfil
    \subfigure[ ]{
    \parbox[b]{1.8cm}{%
      \centering
      \includegraphics[scale=0.45]{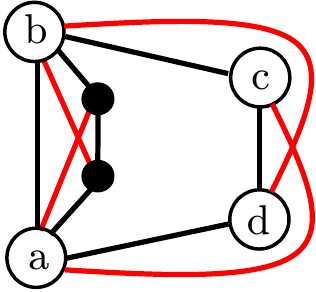}
    }
    \label{fig:W-config}
  }
  \hfil
    \subfigure[ ]{
    \parbox[b]{2.0cm}{%
      \centering
      \includegraphics[scale=0.45]{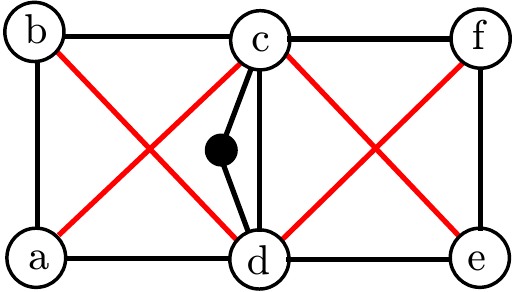}
    }
    \label{fig:W-config}
  }
  \caption{(a) a kite, (b) a B-configuration, (c) a rerouted B-configuration,   (d) a W-configuration, and (e)
  a separation pair with an inner component represented by a dot.}
  \label{fig:kite}
\end{figure}

Graph $G_{\boxplus}$ extends $G_{\boxtimes}$ by multi-edges at
separation pairs and the removal of B-configurations.
 Let $[x,y]$ be a separation pair so that
$G_{\boxtimes} - \{x,y\}$ decomposes into components $H_0,
H_1,\ldots, H_p$ for some $p \geq 1$. By recursion there is a
decomposition tree,  which is a simplified version of the
SPQR-decomposition tree \cite{dt-olpt-96, gm-ltist-01} and can be
computed in linear time. Let $H_0$ be the \emph{outer component} and
let $H_1, \ldots, H_p$ be   \emph{inner components} which are
children of the outer component in the decomposition tree. Expand
each inner component $H_i$ to $\widehat{H_i}$ which includes $x$ and
$y$ and the edges between $x$ and $y$ and vertices of $H_i$. Flip
and permute the inner components in the embedding
$\mathcal{E}(G_{\boxtimes})$ and add further planar edges  so that
no B-configuration remains. An expanded inner component
$\widehat{H_i}$ is embedded as a W-configuration and $\widehat{H_i}$
and $\widehat{H_j}$
 for $i \neq j$ are separated by two pairs of crossing edges.
  Now, an embedding of
$G_{\boxplus}$ is obtained by  adding a copy $e_i$ of the edge $e_0
= \{x,y\}$ between $\widehat{H_i}$  and $\widehat{H_{i+1}}$ and
beyond $H_q$  for $i=1,\ldots, p$ with $H_{p+1} = H_0$
\cite{b-vr1pg-14}, as illustrated Fig.~\ref{fig:separate}. Note that
$G_{\boxplus} = G_{\boxtimes}$ if $G$ is 3-connected.
Graph $G_{\square}$ is obtained from
 of $G_{\boxplus}$ by removing all pairs
of crossing edges in an embedding  $\mathcal{E}(G_{\boxplus})$. Due
to the multi-edges,   the embedding of $G_{\square}$ has triangles
and quadrangles with a quadrangle for each pair of crossing edges.
 Finally,
$G_{\bullet}$ is obtained from an embedding $G_{\boxtimes}$ of an
IC-planar graph $G$ by the contraction of each kite to a single
vertex.

\begin{figure}
   \begin{center}
     \includegraphics[scale=0.7]{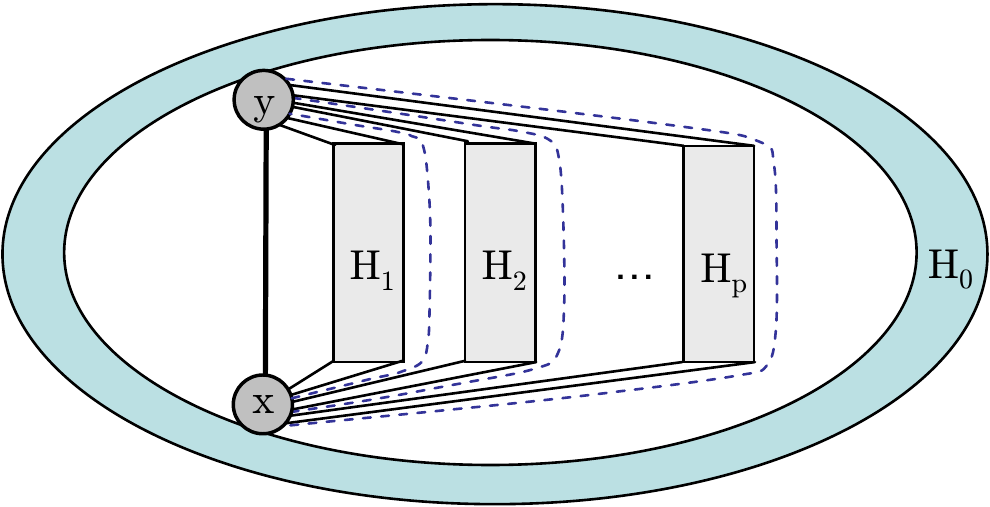}
     \caption{A separation pair $[x,y]$ and a separation of the inner components.}
     \label{fig:separate}
   \end{center}
\end{figure}

\begin{lemma} \label{lem:normalform}
Let $\mathcal{E}(G)$ be a 1-planar embedding of a 1-planar graph
$G$.
\begin{itemize}
  \item The embedding $\mathcal{E}(G_{\boxplus})$ is triangulated.
  \item A pair of crossing edges is embedded as a kite or there is
  a \emph{W}-configuration and a separation pair.
  \item $G_{\boxplus}$  has at most 4n-8 edges.
  \item $\mathcal{E}(G_{\boxplus})$  and $\mathcal{E}(G_{\boxplus}$ can be computed in linear time
  from $\mathcal{E}(G)$.
\end{itemize}
\end{lemma}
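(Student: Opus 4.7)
The plan is to take the four items in order, relying on the construction of $G_{\boxplus}$ laid out just before the lemma and on Thomassen's classification of crossing configurations in 1-planar embeddings.

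For the triangulation bullet, I would argue by contradiction using the maximality built into the construction. Suppose a face $F$ of the planarization of $\mathcal{E}(G_{\boxplus})$ has length at least four. The four faces surrounding any crossing dummy are already triangles (by the kite structure produced by the construction), so $F$ must be bounded exclusively by real vertices. But then a planar chord across $F$ could be added without creating a crossing, contradicting either the maximal planar-augmentation step or, if $F$ lies between two consecutive inner components at a separation pair $[x,y]$, the insertion of the multi-edge $e_i$ that explicitly closes such a quadrilateral.

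The second bullet is essentially a restatement of Thomassen's trichotomy: in any 1-planar embedding a crossing pair is embedded as a kite, a B-configuration, or a W-configuration. The construction of $G_{\boxplus}$ explicitly eliminates every B-configuration by flipping or rerouting, which leaves only kites and W-configurations; and a W-configuration is, by definition, enclosed by a separation pair. For the edge bound, I would invoke Euler's formula on the planarization together with the $4n-8$ density result for 1-planar graphs of \cite{bsw-1og-84}; the multi-edges added at separation pairs are uncrossed, so they preserve 1-planarity, and the triangulation of the planarization makes the count $m=3n+c-6$ followed by $c\le n-2$ immediate. Linear-time computation is a matter of chaining standard subroutines: maximal planar augmentation of an embedded planar graph, an SPQR-style decomposition of $G_{\boxtimes}$ to locate separation pairs, local rerouting of each B-configuration, and insertion of the multi-edges, each of which runs in $O(n)$ on an embedded 1-planar graph of linear size.

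The main obstacle I expect is the bookkeeping around separation pairs. One has to check carefully that after flipping or permuting inner components and inserting the $p$ copies of the separation edge, no new crossing is introduced, no B-configuration is resurrected on the boundary between two adjacent inner components, and the final embedding is still triangulated at the level of the planarization. The cleanest route is to reason entirely on the planarization of $\mathcal{E}(G_{\boxplus})$, because both the triangulation property and the density bound then follow from a single application of Euler's formula, independent of the multi-edge bookkeeping.
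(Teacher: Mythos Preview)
Your proposal is correct, and for the first, second, and fourth bullets your argument is essentially the paper's: maximality of the planar augmentation plus the multi-edges at separation pairs force triangulation, Thomassen's trichotomy with B-configurations rerouted leaves only kites and W-configurations, and the linear-time claim is a chain of standard linear-time subroutines on the planarization.

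For the $4n-8$ edge bound you take a genuinely different route. You count via Euler on the triangulated planarization to get $m=3n+c-6$ and then use $c\le n-2$. This is valid, but you should make the justification of $c\le n-2$ explicit rather than call it ``immediate'': either observe that the added multi-edges are uncrossed, so the crossings of $G_{\boxplus}$ coincide with those of the simple graph $G_{\boxtimes}$, where the classical bound from \cite{bsw-1og-84} gives $c\le n-2$; or argue directly that crossing vertices form an independent set in the triangulated planarization, so the $4c$ triangular faces incident to them are pairwise distinct and $4c\le 2(n+c)-4$. The paper instead proceeds constructively: at each separation pair it replaces the two pairs of W-configuration crossing edges and the $i$-th copy of $\{x,y\}$ between consecutive inner components $H_i,H_{i+1}$ by a kite on $\{a_{i+1},b_{i+1},c_i,d_i\}$ together with the planar edges $\{a_i,a_{i+1}\}$ and $\{b_i,b_{i+1}\}$, obtaining a \emph{simple} 1-planar graph with strictly more edges than $G_{\boxplus}$, and then invokes the known $4n-8$ bound for simple 1-planar graphs. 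Your approach is shorter and avoids the auxiliary construction; the paper's approach is more hands-on and sidesteps any worry about applying density bounds to multigraphs.
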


\begin{proof}
The planar maximal embedding of each 3-connected component is
triangulated \cite{abk-sld3c-13}. It may change the embedding by
rerouting an edge of a B-configuration, which thereby is turned into
a kite, see Figs.~\ref{fig:kite} (b) and(c). Then the stated
properties hold for $\mathcal{E}(G_{\boxtimes})$. At each separation
pair $[x,y]$, the multi-edge  between two components induces a
triangulation with triangles consisting of $x, y$ and a crossing
points of two edges incident to $x$ and $y$. After an elimination of
all B-configurations there is a kite or a W-configuration for each
pair of crossing edges.

Concerning the number of edges, at every separation pair $[x,y]$
with inner components $H_1,\ldots,H_p$ and a 4-cycle $(a_i, b_i,
c_i, d_i)$ as outer boundary of $H_i$ replace the pairs of crossing
edges $\{x, c_i\}, \{y, d_i\}$ and
 $\{x, a_{i+1}\}, \{y,
b_{i+1}\}$ by a kite with edges $\{c_i, a_{i+1}\},  \{c_i,
b_{i+1}\}, \{d_i, a_{i+1}\}, \{d_i, b_{i+1}\}$ and replace the i-th
copy of $\{x, y\}$ between $H_i$ and $H_{i+1}$ by   the edges
$\{a_i, a_{i+1}\}$ and $\{b_i, b_{i+1}\}$ for $i=1,\ldots, p-1$. The
resulting graph is 1-planar and has $q-1$ more edges than $G$.
 Hence, there are
at most $4n-8$ edges. Each step from $\mathcal{E}(G)$ to
$\mathcal{E}(G_{\boxplus})$ takes linear time and is performed on
the embedding of the planarization.
\end{proof}

Concerning the density of 1-planar graphs, each  W-configuration
reduces the maximum number of edges by two, since each pair of edges
crossing in the outer face can be substituted by four edges. This
parallels the situation of planar graphs and 2-connected
components.\\

Next, we consider vertex orderings of planar graphs which are later
applied to graphs $G_{\square}$ and $G_{\bullet}$.

Let $\{s,t\}$ be an edge of a planar graph in the outer face of an
embedding of $G$. An \emph{st-numbering} is an ordering $v_1,
\ldots, v_n$ of the vertices of $G$ such that $s=v_1$, $t=v_n$ and
every vertex $v_i$ other than $s$ and $t$ is adjacent to at least
two vertices $v_j$ and $v_k$ with $j<i<k$. It is known that every
2-connected graph has st-numberings and an st-numbering can be
constructed in linear time \cite{et-stnum-76} for every edge
$\{s,t\}$. An st-numbering induces an orientation of the edges of
$G$ from a low ordered vertex to a high ordered one, called a
\emph{bipolar orientation}.\\

For convenience, we identify each vertex with its st-number and with
its orthogonal polygon in a shape visibility representation and
consider each edge as oriented. In simple words a vertex $u$ is less
than vertex $v$ and vertex $v$ is placed at some point.\\

 If $G$ is planar, then a bipolar orientation
transforms $G$ into an upward planar graph and   partitions the set
of edges incident to a vertex $v$  into a sequence of incoming and a
sequence of outgoing edges \cite{dett-gdavg-99}. Accordingly, each
vertex has two lists of faces below and above it, which are ordered
clockwise or left to right. A face is below $v$ if both edges
incident to $v$ are incoming edges, and above it, otherwise. In
addition, at each separation pair $[x,y]$ with components $H_0,
\ldots, H_k$, the vertices of each inner component $H_i$ with $i
\geq 1$ are ordered consecutively and they appear between $x$ and
$y$ if  $x<y$. We can write $x < H_1 < \ldots < H_k < y$, where
$H_1,\ldots,H_k$ is any permutation of the inner components. For
example, one may choose the cyclic ordering at $x$ if an embedding
is given. st-numberings are a useful tool for the construction of
visibility representations of planar graphs \cite{dett-gdavg-99}.

\emph{Canonical orderings} are used for straight-line drawings of
planar graphs. They were introduced by de Fraysseix et al.
\cite{fpp-hdpgg-90} for triangulated planar graphs and were
generalized to 3-connected \cite{k-dpguco-96} and to 2-connected
graphs \cite{hs-asdpg-98}. The subsequent definition is taken from
\cite{bbc-leftish-11}.

\begin{definition}
Let $ \Pi = (P_0,\ldots, P_q)$ be a partition of the set of vertices
of a graph $G$ of size $n\geq 5$ into paths such that $P_0 = \langle
v_1, v_2 \rangle$, $P_q = \langle v_n \rangle$   and $\langle  v_1,
P_q, v_2 \rangle$ is the outer face in clockwise order. For $k=0,
\ldots, q$ let $G_k$ be the subgraph induced by $V_k = P_0 \cup
\ldots \cup P_k$ and let $C_k$ be the outer face of $G_k$, called
contour. Then $\Pi$ is a \emph{canonical ordering} if for each
$k=1,\ldots, q-1$:

\begin{enumerate}
  \item $C_k$ is a simple cycle.
  \item Each vertex  $z_i$ in $P_k$ has a neighbor in $V-V_k$.
  \item $|P_k|=1$ or each vertex $z_i$ in $P_k$ has exactly two
  neighbors in $G_k$.
\end{enumerate}
\end{definition}

A canonical ordering $\Pi$ is refined into a vertex ordering $v_1,
\ldots, v_n$ by ordering the vertices in each $P_k$, $k>0$, straight
or in reverse.

A canonical ordering can be computed by a peeling technique which
successively removes the vertices of the paths in reverse order
starting from $P_q$. For a quadrangle it would consist  of two paths
of length two.  Care must be taken that the removal of the next path
$P_k$ preserves the 2-connectivity of $G_i$   for $i=1,\ldots, k$,
see \cite{bbc-leftish-11,k-dpguco-96}. The contour $C_k$ is ordered
left to right with $v_1$ at the left and $v_2$ at the right so that
edge $\{v_2, v_1\}$ closes the cycle.

A path $P$ is a \emph{feasible candidate}  for step $k+1$ of $\Pi =
(P_0,\ldots, P_q)$ if also $(P_0,\ldots, P_k,P)$ can be extended to
a canonical ordering of $G$.

\begin{definition}
A canonical ordering $\Pi = (P_0,\ldots,P_q)$  is called
\emph{leftish} if for $k=0,\ldots, q-1$ the following is true: Let
$c_{\ell}$ be the left neighbor of $P_{k+1}$ on $C_k$  and let $P$
be a feasible candidate for step $k+1$ with left neighbor
$c_{\ell'}$. Then $c_{\ell} < c_{\ell'}$.
\end{definition}

A leftish canonical ordering of a 3-connected planar graph can be
computed in linear time \cite{bbc-leftish-11}. For 2-connected
planar graphs we extend the ordering as in the st-numbering case. At
each separation pair $[x,y]$ with $x<y$ remove the inner components
and compute the leftish canonical ordering of the 3-connected
remainder. Then compute the leftish canonical ordering of each
component and insert them just before $y$.\\

We study some properties of    leftish canonical orderings on upward
planar  graphs. The orientation of the edges and upward direction is
obtained from the (extended) leftish canonical ordering, which is an
st-numbering with $s=1$ and $t=n$. Each edge has a face to its left
and to its right if the graphs are 2-connected. Each face $f$ has a
source and a sink, called $bottom(f)$ and $top(f)$, respectively.
Suppose that edge $\{s,t\}$ is routed at the left. We call the face
to the right of $\{s,t\}$ the leftmost face and the outer face is
called the
rightmost face \cite{dett-gdavg-99}.\\

In the remainder of this section, let $G$ be a 3-connected planar
graph $G$ with an st-numbering whose faces are triangles or
quadrangles that are traversed clockwise. The outer face is
excluded.

\begin{definition}
A quadrangle $f = (a,b,c,d)$ is called a \emph{rhomboid} with bottom
$a$, left end $b$, right end $d$, and top $c$ if there are two paths
$\langle a,b,c \rangle$ and $\langle a,d,c \rangle$ enclosing $f$
with $b$ to the left of $f$ and $d$ to the right. Face $f$ is a
\emph{left-trapezoid} if there is an edge $\{a,d\}$ to the right of
$f$ and a path $\langle a, b, c,d \rangle$ to the left. A
\emph{right-trapezoid} is defined accordingly, see
Fig.~\ref{fig:rhombus}.
\end{definition}


\begin{figure}
  \centering
  \subfigure[ ]{
    \parbox[b]{2.1cm}{%
      \centering
      \includegraphics[scale=0.6]{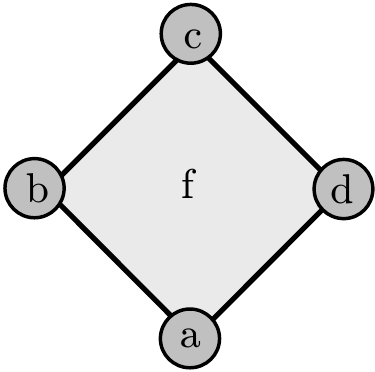}
    }
    \label{fig:kite1}
  }
  \hfil
  \subfigure[ ]{
    \parbox[b]{2.1cm}{%
      \centering
      \includegraphics[scale=0.6]{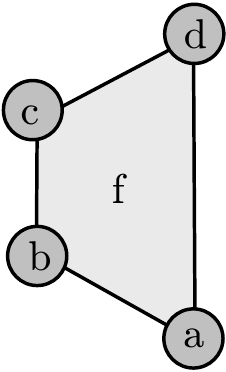}
    }
    \label{fig:B-config}
  }
  \hfil
    \subfigure[ ]{
    \parbox[b]{2.1cm}{%
      \centering
      \includegraphics[scale=0.6]{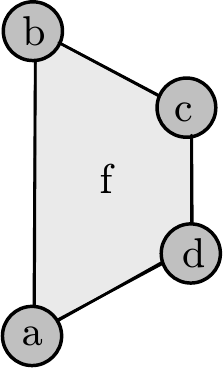}
    }
    \label{fig:W-config}
  }
  \caption{(a) a rhomboid, (b) a left-trapezoid, and (c)  a right-trapezoid}
  \label{fig:rhombus}
\end{figure}

First, each path $P_k$ of a leftish canonical ordering $\Pi$ has
length at most two, since one of $v_1$ and $v_r$ has at least two
neighbors on $C_k$ if $P_k = \langle v_1, \ldots, v_r \rangle$ with
$r\geq 3$. Otherwise,  there are faces as $m$-gons with $m >4$. If
$P_k$ has length two, then it is inserted into $C_k$. Otherwise,
$C_{k+1}$ is obtained by replacing a subsequence $\gamma$ of $C_k$
by $v$ with $P_k = \langle v \rangle$, where the vertices in
$\gamma$ are \emph{covered} by $v$ \cite{fpp-hdpgg-90, k-dpguco-96}.

Second, if $P_k = \langle v_1,  v_2 \rangle$ is a path of length
two, then the face  below $P_k$ is a quadrangle $f_k = ( u_1, v_1,
v_2, u_2)$. We prefer rhomboids over trapezoids and therefore direct
$P_k$ from $v_2$ to $v_1$ if $u_2 = bottom(f_k)$ and otherwise from
$v_1$ to $v_2$. If $P_k = \langle v \rangle$ is a singleton then it
may cover several faces, which are triangles, rhomboids, or
trapezoids. We say that face $f_k$ is \emph{covered} by $P_k$.

Third, we consider faces. For a vertex $v$ on a contour $C_k$ let
  $f_1(v),\ldots, f_{\nu}(v)$ be the left to right ordering of
the faces incident to $v$ and above $C_k$ which is defined by the
clockwise ordering of the outgoing edges. The outer face is
discarded. For each quadrangle $f_i(v) = (v, b,c,d)$ let $t_i =
top(f_i(v))$. Then $t_i=b$ if $f_i(v)$ is a right-trapezoid, $t_i=c$
if $f_i(v)$ is a rhomboid and $t_i=d$ if $f_i$ is a left-trapezoid
and $t_i$ covers $f_i$. A face $f_i(v) = (v,b,d)$ or $f_i(v) =
(v,b,c,d)$ for $i=1,\ldots, \nu$ has \emph{left-support} if there is
a contour $C_{\ell} = (1, \ldots, b, v, \ldots, 2)$ and
\emph{right-support} if $C_{\ell} = (1, \ldots, v, d, \ldots, 2)$.
If $f_i(v)$ has left-support and $d$ is in $P_{k+j_i}$ for some $i_j
\geq 1$ then either $d$ is immediately to the left of $v$   on the
contour $C_{k+j_i}$ or the placement of $d$ covers $v$. The case is
symmetric to the right if $f_i(v)$ has right support.

For an illustration see Fig.~\ref{fig:faces}.

\begin{figure}
   \begin{center}
     \includegraphics[scale=0.55]{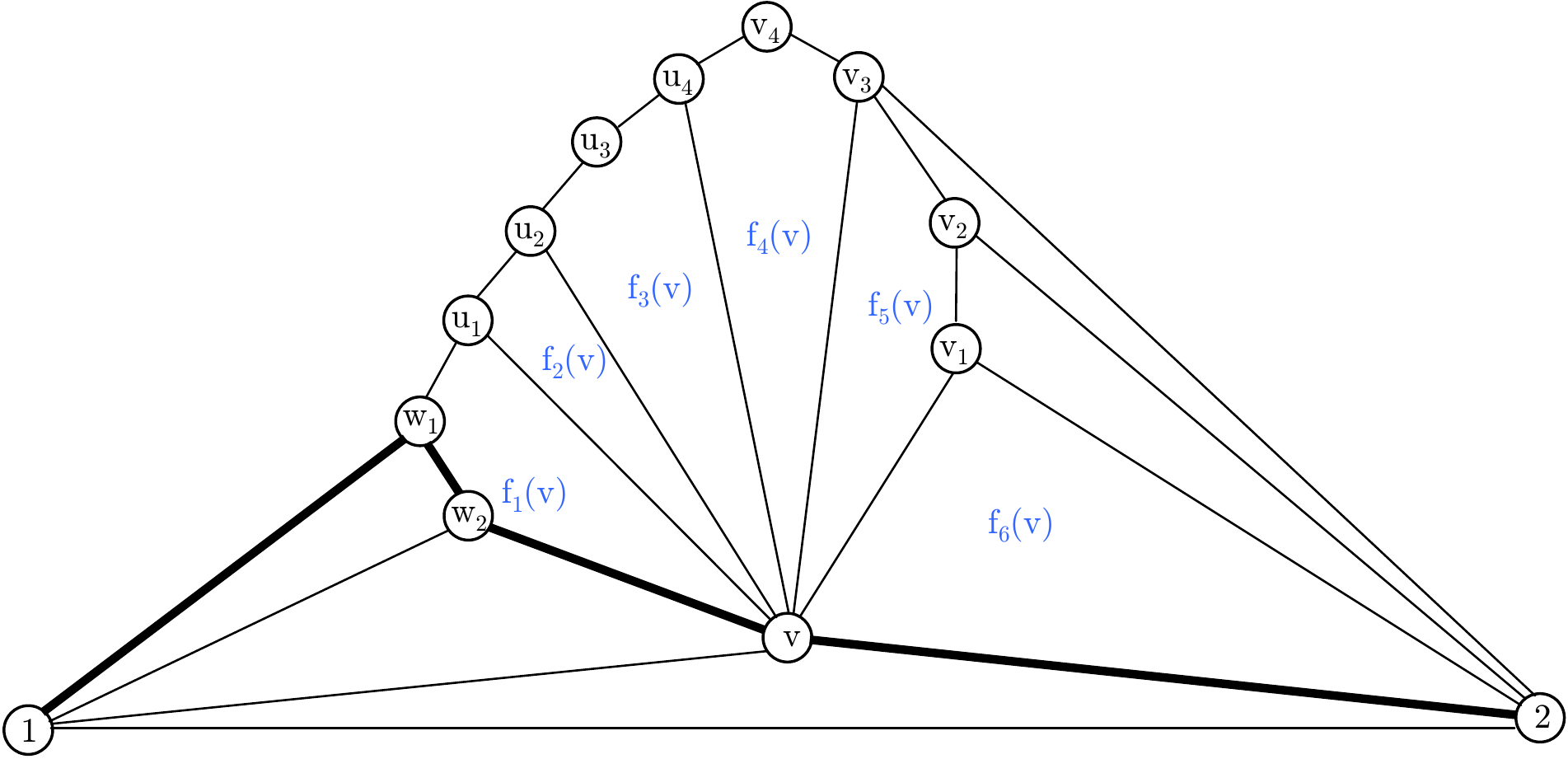}

     \caption{A sequence of faces above vertex $v$. Starting from a contour $C_k = (1, w_1, w_2, v, 2)$
     there is a leftish canonical ordering $\Pi = (\langle u_1
     \rangle, \langle u_2  \rangle,
     \langle u_3, u_4 \rangle,  \langle v_1 \rangle, \langle v_2 \rangle,  \langle v_3 \rangle,  \langle v_4
     \rangle)$.
   Face $f_1(v)$ is a left-trapezoid, $f_2(v)$ and
   $f_6(v)$ are
   triangles, $f_4(v)$ is a rhomboid, and $f_5(v)$ is a
   right-trapezoid.
Face $f_2(v)$ is a rhomboid if edge $\{u_3, u_4\}$ is oriented from
$v_4$ to $v_3$
   and a left-trapezoid, otherwise. Face $f_1(v)$ has left-support,
   $f_5(v)$ has right-support, and $f_4(v)$ has left- and
   right-support. The sequence $(u_1, u_2,u_3, u_4, v_4, v_3,
   v_2, v_1)$ of neighbors of $v$ above $C_k$   is bitonic.}

     \label{fig:faces}
   \end{center}
\end{figure}

\begin{lemma} \label{lem:leftish}
Let $v$ be a vertex on a  contour $C_k$. A quadrangle $f$ with
$v=bottom(f)$
 has left-support (right-support)  if $f$ is a left-trapezoid
(right-trapezoid).
\end{lemma}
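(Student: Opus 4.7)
The plan is to focus on the left-trapezoid case (the right-trapezoid case is symmetric) and to exhibit a specific step $\ell$ of the canonical ordering at which $b$ is the left contour neighbor of $v$. Let $f = (v, b, c, d)$ be a left-trapezoid with $v = \mathrm{bottom}(f)$, $d = \mathrm{top}(f)$, left path $\langle v, b, c, d \rangle$, and right edge $\{v, d\}$, and let $\ell_u$ denote the step at which vertex $u$ is placed. I would target $\ell = \ell_b$.

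The first step is to verify that $v$ still lies on the contour at step $\ell_b - 1$. Since $v$ has the outgoing edge $\{v, d\}$, it cannot be covered before step $\ell_d$, and because the left path $v, b, c, d$ is increasing in the st-numbering, $\ell_b < \ell_d$, which gives $v \in C_{\ell_b - 1}$. The second step is a local analysis at $b$: the clockwise traversal $v \to b \to c \to d \to v$ of $f$ puts $f$ on the CCW side of the edge $\{b, v\}$ at $b$, and the next edge of $f$ met in CCW order at $b$ is $\{b, c\}$, which is outgoing since $c > b$. Therefore the incoming arc of $b$ terminates, in CCW direction, precisely at $\{b, v\}$, i.e., $\{b, v\}$ is the CCW-latest incoming edge at $b$.

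These two facts let me conclude that $b$ is placed as the left contour neighbor of $v$ on $C_{\ell_b}$. If $b$ is placed as a singleton, the insertion rule of the canonical ordering makes $b$'s right contour neighbor on the new contour the CCW-latest among its incoming contour neighbors; by the above, this is $v$, so $b$ is $v$'s left contour neighbor. If instead $b$ belongs to a $2$-path, then necessarily $P_{\ell_b} = \langle b, c \rangle$, and $c$'s unique incoming neighbor $u''$ in $V_{\ell_b - 1}$ lies on the side of $c$ opposite to $f$. By the left-trapezoid structure, $u''$ sits geometrically to the left of $v$, making it $v$'s left contour neighbor on $C_{\ell_b - 1}$; inserting the $2$-path produces the contour $(\ldots, u'', c, b, v, \ldots)$, so $b$ is again immediately to the left of $v$. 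Either way, left-support holds with $\ell = \ell_b$. The right-trapezoid case proceeds symmetrically, with the edge $\{d, v\}$ playing the role of the CCW-earliest incoming edge at $d$.

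The main obstacle will be the third step: rigorously identifying, in each subcase, that the planar alignment maps the CCW-latest (respectively CCW-earliest) incoming neighbor at $b$ (respectively $d$) to the right (respectively left) contour neighbor on the new contour. This depends on the canonical ordering convention that the left-to-right order of contour vertices aligns with the embedding's orientation; it is most subtle in the $2$-path subcase, where one must carefully locate the auxiliary vertex $u''$ using the face $f$ and argue that it falls on the appropriate side of $v$ on the prior contour.
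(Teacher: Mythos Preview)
Your approach is sound and is essentially the same idea as the paper's (very terse) argument: both hinge on the step at which $b$ is placed and use the position of the face $f$ in the rotation at the edge $\{v,b\}$ to force $v$ to be $b$'s right contour neighbour. The paper compresses this into a single contrapositive sentence (``if $b$ has a neighbor to the right of $v$ on the contour, then $f$ cannot be a left-trapezoid''), while you unfold the orientation reasoning explicitly; the content is the same.

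There is one small gap in your 2-path subcase. The assertion ``necessarily $P_{\ell_b} = \langle b, c \rangle$'' is not justified and need not hold: all you know is that $b$ has exactly two neighbours in $G_{\ell_b}$, namely $v$ and the other path vertex $x$, and nothing prevents $\{b,c\}$ from going to a vertex placed strictly later, with $x\neq c$ some other neighbour of $b$. The fix is immediate and uses precisely the geometric fact you already flag as the crux. The quadrangle $g$ below the 2-path shares the edge $\{v,b\}$ with $f$ and lies on the opposite side of it in the rotation at $v$; since $f$ sits between $\{v,b\}$ and $\{v,d\}$ with $\{v,b\}$ to the left, the edge $\{v,u''\}$ bounding $g$ lies still further left, so $u''$ is $v$'s left contour neighbour on $C_{\ell_b-1}$ regardless of whether $x=c$. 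Inserting the 2-path between $u''$ and $v$ then yields the contour $(\ldots,u'',x,b,v,\ldots)$ as desired.
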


\begin{proof}
For a contradiction, suppose that $f = (a,b,c,d)$ is a
left-trapezoid and has no left-support. Since $f$ is a
left-trapezoid, vertex $b$ appears before vertices $c$ and $d$ in
the vertex ordering. If $b$ has a neighbor to the right of $v$ on
the contour, then $f$ cannot be a left-trapezoid.
\end{proof}

Note that the statement may not apply to the outer face, which later
on may need a spacial treatment. The type of quadrangles $f(v)$ is
determined by their support and the length of the path with the top
vertex in the leftish canonical ordering.

Let $f_1(v), \ldots, f_{\nu}(v)$ be the left to right (clockwise)
ordering of faces with bottom $v$ above $C_k$ and let
  $j_1 < \ldots < j_{\mu}$ be the subsequence of quadrangles. The
  other faces are triangles.
  For $i=1, \ldots, \nu$ let $P_{t_i}$ contain the top vertex of
  $f_{i}$ so that $f_{i}$ is closed by $P_{t_i}$.  Thus, $t_i$ is the time stamp for the completion of
  face  $f_{j_i}$ in a canonical ordering and a drawing based on it.

\begin{lemma} \label{lem:decidetrapezoid}
If a face $f=f_i(v)$ for $i=1, \ldots, \nu$  is a triangle, then it
has left-support or right-support and $P_{t_i}$ is a singleton. If
$f$ is a quadrangle, then
 $f$ is a left-trapezoid if
$f$ has  no right-support and $P_{t_i}$ is a singleton.  Face $f$ is
a right-trapezoid if $f$ has  no left-support and $P_{t_i}$ is a
singleton, and $f$ is a rhomboid if $f$ has left and right-support
or $P_{t_i}$  is a path of length two.
\end{lemma}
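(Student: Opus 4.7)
The natural approach is a case analysis on whether $f$ is a triangle or a quadrangle, and then on whether $P_{t_i}$ is a singleton or a length-two path.  I would exploit throughout two structural facts about the leftish canonical ordering: a singleton placement $P_{t_i}=\langle t\rangle$ covers a contiguous block of the preceding contour $C_{t_i-1}$ whose outer endpoints become the two new contour neighbors of $t$; whereas a length-two path insertion creates exactly one new quadrangular face below it between two previously adjacent contour vertices.

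I would first dispose of the triangle case.  Because a length-two path always encloses a quadrangle below it, the triangle $f$ must be closed by a singleton $P_{t_i}=\langle t\rangle$, where $t$ is the sink of $f$.  When $t$ is placed, the covered block of $C_{t_i-1}$ contains $v$ together with exactly one contour neighbor of $v$, namely the unique third boundary vertex of $f$.  If that neighbor lies immediately to the left of $v$ then $f$ has left-support by definition; if immediately to the right, right-support; one of the two necessarily holds.  For the quadrangle case with $P_{t_i}=\langle z_1,z_2\rangle$ of length two, the single face created below the inserted path has $v$ as its bottom and both boundary paths from $v$ to the sink of length two -- this is a rhomboid by the orientation convention fixed just before the definition of rhomboid.

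The main work is in the singleton case for quadrangles, $P_{t_i}=\langle c\rangle$ with $f=(v,b,c,d)$.  Since $c$ covers a contiguous block of $C_{t_i-1}$ whose endpoints are the two contour neighbors of $c$ on the boundary of $f$, and $f$ has exactly four boundary vertices, the two sub-blocks from each endpoint down to $v$ each have length one or two.  If both sub-blocks have length one, then $b$ sits immediately left of $v$ and $d$ immediately right of $v$ on $C_{t_i-1}$, so $f$ enjoys both supports and is a rhomboid.  If exactly one sub-block -- say the right -- has length two, then $b$ still sits immediately left of $v$ (left-support), while $d$ never appears immediately right of $v$ on any contour strictly before $c$ is placed, so $f$ lacks right-support; among the three quadrangle shapes only the left-trapezoid fits.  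The mirror argument covers the right-trapezoid case, and Lemma~\ref{lem:leftish} then supplies the converse direction needed to certify that the support pattern found matches the trapezoid type.

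I expect the main obstacle to be verifying that the intermediate vertex on the long side of a trapezoid cannot itself appear immediately next to $v$ on any earlier contour, which would spuriously grant $f$ the wrong support.  This exclusion is exactly where the leftish property enters: the rule forces every earlier singleton placed on the long side of $f$ to cover $v$'s current contour neighbor rather than to take a position immediately next to $v$ itself.
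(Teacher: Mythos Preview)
Your case analysis matches the paper's proof in outline: triangles are closed by singletons (since a length-two path encloses a quadrangle), every quadrangle acquires at least one support because paths have length at most two, a length-two $P_{t_i}$ forces a rhomboid by the orientation convention, and for a singleton $P_{t_i}$ the presence or absence of each support pins down which boundary vertex is the top and hence the shape.  The paper's own argument is exactly this, stated tersely.

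Two remarks on your write-up.  First, a labeling slip: you write $P_{t_i}=\langle c\rangle$ in the singleton-quadrangle case, but in the paper's clockwise convention $f=(v,b,c,d)$ the top is $c$ only for a rhomboid; for a left-trapezoid the top is $d$ and for a right-trapezoid it is $b$.  Your ``sub-block'' description inherits this confusion --- it is cleaner to argue, as the paper does, that if the top were $b$ then Lemma~\ref{lem:leftish} yields right-support, and if the top were $c$ then both $b$ and $d$ are placed before $c$ and each lands immediately next to $v$ on some contour, giving both supports; so ``no right-support'' forces the top to be $d$.

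Second, your anticipated obstacle is a phantom.  Left- and right-support are defined with respect to the \emph{specific} vertices $b$ and $d$ (the clockwise-second and clockwise-fourth boundary vertices of $f$), not with respect to arbitrary boundary vertices.  An intermediate vertex on the long side of a trapezoid sitting next to $v$ on some contour grants neither support by definition, so there is nothing to exclude and the leftish property plays no role here.  The lemma holds for any canonical ordering; the leftish hypothesis is used only later, in Lemma~\ref{lem:bitonic}.
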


\begin{proof}
If $f$ is a triangle, then it must be closed by a path of length one
of the leftish canonical ordering and therefore it has left- or
right-support.

Each quadrangle $f$ has a left- or a right-support, since the paths
have length at most two. If $f$ has no right-support and $P_{t_i}$
is a singleton, then vertices $b$ and $c$ are placed before vertex
$d=top(f)$ if $f=(v,b,c,d)$ in clockwise order and $f$ is a
left-trapezoid. The case to the right is symmetric. If $f=(v,b,c,d)$
has left- and right-support, then $c=top(f)$ and $b$ and $d$ are
less than  $c$ in the vertex numbering, so that $f$ is a rhomboid.
If $P_{t_i}$  is a path of length two, then $f$ is a rhomboid by
construction.
\end{proof}

The leftish canonical ordering also determines the order in which
the   vertices of the faces at $v$ are placed.

\begin{lemma} \label{lem:bitonic}
For a leftish canonical ordering $\Pi$ and a vertex $v$ on a contour
$C_k$, the  clockwise sequence of neighbors  $w_1, \ldots, w_z$ of
$v$ above $C_k$ is bitonic, i.e., there is some $m$ with $1 \leq m
\leq z$ such that $w_1 < \ldots < w_m$, $w_{m+1} > \ldots > w_z$ and
$w_m < w_z$ in $\Pi$.
\end{lemma}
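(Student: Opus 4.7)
The plan is to analyze the order in which the upward neighbors $w_1,\ldots,w_z$ of $v$ are inserted into successive contours by the leftish canonical ordering. For each $i$, let $\tau_i$ denote the index with $w_i \in P_{\tau_i}$. Since vertex numbers within a path of length two are assigned contiguously according to the orientation rule fixed in Sect.~\ref{sect:prelim}, it suffices to establish the claimed bitonic structure at the level of the $\tau_i$ and then to check that the within-path tie-breaking respects the inequalities between consecutive $w_i$'s.

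The main device is to track the \emph{local} contour around $v$. While $v$ is still present on $C_j$ for $j\geq k$, the contour has the form $\ldots,\ell_j,v,r_j,\ldots$, where $\ell_j$ and $r_j$ are $v$'s current left and right contour-neighbors. A subsequent path $P_{j+1}$ that touches $v$'s neighborhood does exactly one of three things: (a) a \emph{left-insertion}, which covers $\ell_j$ and installs some upward neighbor $w_i$ of $v$ as the new left contour-neighbor $\ell_{j+1}$; (b) a symmetric \emph{right-insertion}; or (c) a \emph{cover}, a path that absorbs $v$ itself, thereby removing $v$ from the contour and closing all remaining faces above $v$.

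By the leftish property, whenever both a left- and a right-insertion at $v$ are simultaneously feasible, the left one is preferred. Combined with the fact that feasibility of an insertion at $v$ is not destroyed by path additions performed elsewhere on the contour, this forces all available left-insertions at $v$ to be exhausted before any right-insertion is performed. Hence there is an index $m$ such that the left-insertions occur in the order $w_1,w_2,\ldots,w_m$, then the right-insertions in the order $w_z,w_{z-1},\ldots,w_{m+2}$, and the final cover accounts for the peak neighbor $w_{m+1}$. This yields
\[
\tau_1<\tau_2<\cdots<\tau_m<\tau_z<\tau_{z-1}<\cdots<\tau_{m+2}<\tau_{m+1},
\]
which translates at the level of vertex numbers to $w_1<\cdots<w_m<w_z<w_{z-1}<\cdots<w_{m+2}<w_{m+1}$. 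In particular $w_1<\cdots<w_m$, $w_{m+1}>\cdots>w_z$, and $w_m<w_z$, as required.

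The hard part is the feasibility-preservation claim invoked in the previous paragraph: once a left-insertion at $v$ becomes a feasible candidate, it must remain feasible through arbitrarily many intervening steps until leftish finally selects it. This reduces, via the definition of feasible candidate from Sect.~\ref{sect:prelim} together with the face classification of Lemma~\ref{lem:decidetrapezoid}, to checking that adding a path in a region of $C_j$ disjoint from $\ell_j,v,r_j$ alters neither the support of the leftmost unclosed face above $v$ nor the admissibility of a $v$-left candidate. Once this stability point is verified, the three-phase structure enforced by leftish is unavoidable and the bitonic conclusion follows.
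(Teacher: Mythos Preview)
Your overall strategy---showing that the upward neighbors of $v$ split into a ``left block'' (installed as successive left contour-neighbors of $v$), a ``right block'' (successive right contour-neighbors), and a final cover vertex---is exactly the mechanism behind the paper's proof. The paper phrases the same idea at the face level rather than the vertex level: it shows that the faces $f_1(v),\ldots,f_\eta(v)$ above $v$ split into a prefix with left-support and a suffix with right-support, and then reads off the vertex bitonicity from that. So the two routes are equivalent reformulations.

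There is, however, a genuine gap in your argument, and it is not quite the one you flag. You state feasibility-preservation as ``once a left-insertion at $v$ becomes feasible, it remains feasible until selected,'' and then claim this forces all left-insertions to precede all right-insertions. But that implication does not hold. Forward preservation of feasibility does not rule out the scenario: at step $j$ a right-insertion at $v$ is selected (no left-insertion being feasible just then), and at some later step $j'>j$ a left-insertion at $v$ becomes \emph{newly} feasible and is performed. Nothing in ``feasible $\Rightarrow$ stays feasible'' blocks this. What you actually need is the opposite monotonicity: if no left-insertion at $v$ is feasible when a right-insertion is selected, then none ever becomes feasible afterwards. Your stated stability claim points in the wrong direction for that conclusion.

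The paper sidesteps the whole feasibility-of-candidates discussion. It argues directly on contours: consecutive faces $f_\kappa(v)$ and $f_{\kappa+1}(v)$ share the edge $\{v,d\}$; if $f_\kappa$ has right-support then some contour reads $(\ldots,v,d,\ldots)$, while if $f_{\kappa+1}$ has left-support then some contour reads $(\ldots,d,v,\ldots)$, and the two are incompatible with a fixed planar embedding. That one-line contour contradiction replaces your feasibility analysis entirely and is what you should supply in place of the ``stability point'' you left open.
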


For an illustration, see Fig.~\ref{fig:faces} and observe that the
  sequence of the vertices corresponds to the order in which
the faces above $v$ are completed, first from left to right   and
then from right to left.

\begin{proof}
Consider the clockwise sequence of faces $f_1(v),\ldots,
f_{\eta}(v)$ with $\eta \leq z$. Then there is some $\mu$ so that
$f_j$ has left-support for $i=1,\ldots, \mu$ and $f_j$ has
right-support for $\mu+1,\ldots, \eta$. Otherwise, suppose for some
$\kappa$ with $1 \leq \kappa < \mu-1$  face $f_{\kappa}$ has
right-support and face $f_{\kappa+1}$ has left-support. Then
$C_{j_{\kappa}} = (1,\ldots, v, d, \ldots, 2)$ for some vertex $d$
of $f_{\kappa}$ and $C_{j_{\kappa+1}} = (1,\ldots, d, v, \ldots,
2)$, a contradiction.

In consequence, for $1 \leq j < \mu$ and vertices $x$ in $f_j(v)$
and $y$ in $f_{j+1}(v)$ it holds that $x < y$ in $\Pi$. Similarly,
we have $x > y$ for vertices $x$ in $f_j(v)$ and $y$ in $f_{j+1}(v)$
and $\mu+1 \leq j \leq z$. Paths of length two of $\Pi$ are ordered
in accordance with this ordering. Now let $w_1, \ldots, w_m$ be the
vertices in faces $f_1(v),\ldots, f_{\mu}(v)$. By planarity, the
ordering of the faces is in accordance with the ordering of the
vertices in $\Pi$, which is bitonic.
\end{proof}

Note that the sequence of neighbors of $v$ is not continuous in the
leftish canonical ordering. In general, a neighbor $w_i$ of vertex
$v$ on $C_k$ is a right-support of some face $f(u)$ for a vertex $u$
to the left of $v$ on $C_k$.

\section{Rectangle Visibility Representation of IC-planar Graphs} \label{sect:proof1}

The proofs of Theorems \ref{thm:IC-RVG} and \ref{thm:1-cat} are
constructive. The outline of the algorithms is as follows: Take an
embedding $\mathcal{E}(G)$ as a witness for IC-planarity and
1-planarity, respectively. The embedding is first augmented to
$\mathcal{E}(G_{\boxplus})$ as given in Lemma \ref{lem:normalform}.
Since planar maximal IC-planar graphs are 3-connected we can use
$\mathcal{E}(G_{\boxtimes})$ in this case. Thereby, some edges can
be rerouted and some are multiplied to separate components. If the
input were a graph, then constructing the normal form embedding is
an NP-hard problem, since the general recognition problem is NP-hard
and is solvable in polynomial time for graphs with a normal form
embedding \cite{b-4m1pg-15, cgp-rh4mg-06}. Next, $G_{\boxplus}$ is
planarized to $G_{\square}$ by a removal of all pairs of crossing
edges while preserving the (new) embedding. The planarization is
obtained via $G_{\bullet}$  for IC-planar graphs. In a nutshell, the
algorithms use a standard algorithm for  the construction of a
  visibility representation of a planar graph, see \cite{dett-gdavg-99, k-mcvr-97,
  rt-rplbopg-86,tt-vrpg-86}.
  Finally, the pairs of crossing edges are added to the planar
  visibility representation  of $G_{\square}$.
  In case of  IC-planar graphs, there is a quadrangle $f = (a,b,c,d)$  which is drawn as a rhomboid with
 a vertex $b$ to the left and a vertex $d$ to the right of $f$ and at the same
 level ($y$-coordinate)  into which a pair of
 crossing edges is inserted. In the second case, we use a  leftish canonical
 ordering for an st-numbering
 and the capabilities of
 horizontal and vertical lines of sight in the weak visibility
 version.

First, we define   kite-contraction and kite-expansion operations on
IC-planar embeddings in normal form. By IC-planarity, two kites have
no common vertex and do not intersect so that kite-contractions do
not interfere. Moreover, each pair of crossing edges is embedded as
a kite and $G$ is 3-connected  if $\mathcal{E}(G)$ is an IC-planar
embedding in normal form,\cite{abk-sld3c-13, bbhnr-NIC-16}

\begin{definition}
Let $\mathcal{E}(G)$ be an IC-planar embedding in normal form and
suppose there is an $st$-ordering of $G$.

A \emph{kite-contraction} contracts a  kite $\kappa$ with boundary
$(a,b,c,d)$ of $\mathcal{E}(G)$ to a single vertex $v_{\kappa}$ so
that $v_{\kappa}$ inherits all incident edges and henceforth has
multi-edges. A \emph{kite-expansion} is the inverse operation on the
boundary and replaces $v_{\kappa}$ by the 4-cycle $(a,b,c,d)$. Both
operations are adjacency preserving so that a kite-contraction
followed by a kite-expansion just removes the pair of crossing edges
of $\kappa$.

The kite-contraction $G_{\bullet}$ of  $G$ is obtained by
contracting all kites of $\mathcal{E}(G)$.
\end{definition}

\begin{lemma}
Let  $\mathcal{E}(G)$ be an IC-planar embedding in normal form. Then
graph $G_{\bullet}$   is a 3-connected planar graph, which can be
computed from $\mathcal{E}(G)$ in linear time.
\end{lemma}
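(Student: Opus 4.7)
The plan is to verify planarity, 3-connectivity, and the linear-time bound in turn. For planarity, I would first pass to the planar subgraph $G_\square$ obtained from $G$ by deleting the two crossing edges of every kite. By the normal-form assumption, the boundary $4$-cycle of each kite bounds an empty face of $\mathcal{E}(G_\square)$, and by IC-planarity these faces are pairwise vertex-disjoint. Simultaneously collapsing each such 4-face to a point is then a legal planar operation (topologically, it shrinks an open disk whose boundary is a cycle of the planar graph) and produces a planar embedding of $G_\bullet$ directly.

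For 3-connectivity I would argue by contradiction: suppose $\{x,y\}$ separates $G_\bullet$, and case-split on whether $x$ or $y$ is a contracted kite vertex. The easy case is when neither is contracted; then any $u$-$w$ path in $G$ avoiding $\{x,y\}$ projects to a $u$-$w$ walk in $G_\bullet$ avoiding $\{x,y\}$, so $\{x,y\}$ is also a 2-cut of $G$, contradicting the cited 3-connectivity of normal-form IC-planar graphs \cite{abk-sld3c-13, bbhnr-NIC-16}. The harder case is $x=v_\kappa$ for a kite $\kappa=(a,b,c,d)$ (and symmetrically when $y$ is also contracted). I would pick $u,w$ in different components of $G_\bullet-\{x,y\}$, invoke Menger to produce three internally vertex-disjoint $u$-$w$ paths in $G$, and observe that at most one of them passes through $y$, so at least two must enter and exit the kite at distinct corners. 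Then I would use that the 4-cycle of $\kappa$ bounds a face of $G_\square$, so the cyclic order around $v_\kappa$ in $G_\bullet$ is inherited from the rotation around that 4-face; this lets me re-route two of the Menger paths along the boundary cycle to produce a $u$-$w$ path in $G_\bullet-\{v_\kappa,y\}$, contradicting the assumption that $\{v_\kappa,y\}$ separates.

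For the linear-time bound, Lemma~\ref{lem:normalform} already gives $|\mathcal{E}(G)|=O(n)$ and an $O(n)$-time construction of the normal form. The kites of $\mathcal{E}(G)$ can be located by one pass over the list of crossing edges (each crossing belongs to exactly one kite by IC-planarity), and, because the kites are pairwise vertex-disjoint, every kite-contraction is a constant-time local splice on the rotation system of the planarization. Thus $G_\bullet$ together with its inherited embedding is produced in $O(n)$ time.

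The main obstacle will be the 3-connectivity case in which the purported separator involves a contracted kite vertex $v_\kappa$: the naive lift only yields a $5$-vertex separator of $G$ (the four corners of $\kappa$ plus $y$), which is fully consistent with 3-connectivity and therefore no immediate contradiction. The crux is to exploit the planar $4$-face bounded by the kite and the induced cyclic order at $v_\kappa$ to reroute the Menger paths along the $4$-cycle, effectively shrinking the lifted separator back down to two vertices of $G$.
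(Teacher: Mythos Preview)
Your planarity argument and the linear-time analysis are fine and essentially match the paper's reasoning (the paper passes through $G_\triangle$, obtained by deleting only one crossing edge per kite, rather than your $G_\square$, but either way $G_\bullet$ is exhibited as a minor of a planar graph).

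Your 3-connectivity argument, however, has a real gap. The ``rerouting along the boundary cycle'' step cannot do what you need: the boundary $4$-cycle of $\kappa$ is precisely the set of vertices identified to $v_\kappa$, so any walk that uses one of its edges still passes through $v_\kappa$ after contraction and hence cannot yield a $u$--$w$ path in $G_\bullet-\{v_\kappa,y\}$. More fundamentally, contracting a $4$-clique in a $3$-connected graph need not preserve $3$-connectivity (for instance, take a $K_4$ on $\{a,b,c,d\}$, add vertices $e,f$ each adjacent to all of $a,b,c,d$, and then add $g$ adjacent to $e,a,b$ and $h$ adjacent to $f,c,d$; this graph is $3$-connected, but after contracting $\{a,b,c,d\}$ the contracted vertex is a cut vertex). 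So the Menger lift by itself is insufficient; you correctly flag that extra structure from the normal-form IC-planar embedding must be used, but the proposal does not say how the planar $4$-face actually closes the argument.

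The paper takes a much shorter route that bypasses path-rerouting entirely. It works with $G_\triangle$, which is a planar \emph{triangulation}; contracting an edge of a planar triangulation and discarding the resulting parallel edges again yields a planar triangulation. Iterating over the (vertex-disjoint) kites, the simple graph underlying $G_\bullet$ is therefore a planar triangulation, hence $3$-connected, and restoring the multi-edges only helps. This structural observation replaces the whole Menger/rerouting case analysis.
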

\begin{proof}
For planarity, first remove one edge from each pair of crossing
edges of each kite, which results in a graph $G_{\triangle}$. Then
$G_{\triangle}$ is a triangulated  planar graph that inherits its
embedding from $\mathcal{E}(G)$. Thus it is 3-connected. Next
contract the edges that remain from each kite to a vertex
$v_{\kappa}$. Since the planar graphs are closed under taking
minors, the edge contractions preserve  planarity and yield
$G_{\bullet}$, since by IC-planarity each vertex $v$ either remains
or is contracted to a vertex $v_{\kappa}$. The removal of
multi-edges results in a   triangulated planar graph, which is
3-connected, and so is $G_{\bullet}$.

It takes linear time   to obtain $G_{\triangle}$ from
$\mathcal{E}(G)$ and $G_{\bullet}$ from $G_{\triangle}$.
\end{proof}

Note that this type of kite-contractions cannot be applied to
1-planar (or NIC-planar) graphs, since vertices may belong to
several kites. Instead one may contract a kite to a single vertex
which
corresponds to its crossing point or consider the $K_4$ network \cite{begghr-odm1p-13}. \\

Next, we consider rhomboidal st-numberings.

\begin{definition}
An embedding $\mathcal{E}(G)$ of a 1-planar graph in normal form is
called \emph{rhomboidal} with respect to an st-numbering if the
$K_4$ subgraph induced by a pair of crossing edges is embedded as a
kite whose boundary is a rhomboid.
\end{definition}

Rhomboidal embeddings distinguish IC-planar graphs from NIC-planar
graphs.

\begin{lemma}
For every  IC-planar graph $G$ and every planar edge $\{s,t\}$ there
is a rhomboidal embedding   which can be computed in linear time
from $\mathcal{E}(G)$.
\end{lemma}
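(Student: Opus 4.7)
The plan is to reduce $G$ to a $3$-connected planar graph via kite-contraction, compute an $st$-numbering there, and then carefully lift it back so each kite's boundary $4$-cycle becomes a rhomboid. Starting from the given $\mathcal{E}(G)$, I first augment to the normal form $\mathcal{E}(G_\boxtimes)$ using Lemma~\ref{lem:normalform} (which is $3$-connected in the IC-planar case), and contract each kite to obtain $G_\bullet$, a $3$-connected planar graph. The planar edge $\{s,t\}$ projects to a planar edge $\{s',t'\}$ on the outer face of $G_\bullet$, where $s'=v_{\kappa_s}$ if $s$ is a kite corner, and similarly for $t'$. An $s't'$-numbering of $G_\bullet$ is computed in linear time via \cite{et-stnum-76}.

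I then lift the numbering to $G$ by expanding each kite. For a kite $\kappa$ whose contracted vertex $v_\kappa$ occupies position $k$, the four boundary corners (cyclic in the embedding, with the two crossing diagonals joining opposite corners) receive the four consecutive numbers $k,k{+}1,k{+}2,k{+}3$, with all higher positions shifted up by three. The only freedom is which corner gets which new number. I select a corner $a$ for the smallest value such that $a$ has an external neighbor with $G_\bullet$-number less than $k$ (i.e. an incoming edge at $v_\kappa$), and take its diagonally opposite corner $c$ for the largest value, chosen so that $c$ has an external neighbor with $G_\bullet$-number greater than $k$ (i.e. an outgoing edge at $v_\kappa$); the remaining two corners receive the middle values and get labels $b$ and $d$ according to their left/right position in the embedding. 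The resulting $4$-cycle $(a,b,c,d)$ is a rhomboid because $a<b,d<c$ and $a,c$ are diagonally opposite, and the extended numbering is still an $st$-numbering: $a$'s required lower neighbor is supplied by its external incoming edge, $c$'s required higher neighbor by its external outgoing edge, while $b$ and $d$ automatically have both an incoming and an outgoing kite edge.

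The technical heart of the proof is showing that such a diagonal pair of corners always exists. Two structural facts do the job. First, because $\mathcal{E}(G_\boxtimes)$ is triangulated (Lemma~\ref{lem:normalform}), each of the four outer faces of any kite at a boundary edge $\{v_i,v_{i+1}\}$ is a triangle whose third vertex lies outside the kite; hence every kite corner has at least one external neighbour. Second, at $v_\kappa$ in the $2$-connected planar graph $G_\bullet$, the incoming and outgoing edges each form a contiguous arc in the cyclic rotation, while the external edges of the four kite corners partition that rotation into four contiguous, non-empty blocks in the same cyclic order as the corners. A brief case analysis on how the contiguous incoming arc meets the four non-empty corner blocks (lying within one block; spanning one complete block plus part of a neighbour; spanning two or three complete blocks, etc.) shows in every configuration that some diagonal pair $\{v_i,v_{i+2}\}$ has $v_i$ incident to an incoming external edge and $v_{i+2}$ incident to an outgoing one.

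The main obstacle is exactly this case analysis: without the triangulation-based fact that every corner has an external neighbour, it is possible to construct configurations in which no feasible diagonal exists. Corner cases where $s$ or $t$ itself is a kite corner pin down one of $a,c$, but the same contiguity argument still selects a feasible opposite corner; the edge case where both $s$ and $t$ lie in the same kite (so $\{s,t\}$ is a boundary edge of that kite) requires a small preprocessing step that reroutes the role of $\{s,t\}$ by replacing it with another outer-face edge of the kite, and then re-identifies the $st$-pair so that $s$ and $t$ end up diagonally opposite. All steps -- the normalization, the kite-contraction, the $st$-numbering computation, and the local relabelling at each kite -- execute in linear time in total.
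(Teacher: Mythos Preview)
Your approach coincides with the paper's: normalize to $\mathcal{E}(G_\boxtimes)$, contract every kite to obtain $G_\bullet$, compute an $st$-numbering there, and then expand each kite by selecting a diagonal pair of corners as bottom/top so that the boundary $4$-cycle becomes a rhomboid. The paper's argument is considerably terser---it simply asserts that a suitable opposite pair (one corner with an incoming external edge, the opposite one with an outgoing external edge) can be chosen, without your contiguity/blocks case analysis---so your write-up in fact supplies justification that the paper omits.

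One caveat: your handling of the sub-case where $\{s,t\}$ is itself a kite boundary edge does not work as written. ``Rerouting'' to another outer-face edge or ``re-identifying the $st$-pair so that $s$ and $t$ end up diagonally opposite'' changes the hypothesis: the lemma fixes the planar edge $\{s,t\}$, and a diagonal of a kite is a crossing edge, not a planar one. In fact, if $\{s,t\}$ is a kite boundary edge then that kite cannot be a rhomboid with $s=1$ and $t=n$ (the top and bottom of a rhomboid are opposite corners, hence joined by a crossing edge), so the lemma as literally stated fails in this corner case; the paper's proof is equally silent about it. For the actual application in Algorithm~IC-RV-DRAWER this is harmless, since there the edge $\{s,t\}$ can be chosen freely.
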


\begin{proof}
First, construct an IC-planar embedding in normal form with
$\{s,t\}$ in the outer face. Next, compute a kite-contraction
$\mathcal{E}(G_{\bullet})$ and an st-numbering of $G_{\bullet}$.
Then do the kite-expansion and extend the st-numbering of
$G_{\bullet}$ to an st-numbering of $G$ as follows: For each
contracted kite $\kappa$  determine a top and a bottom vertex and
then the left and right ends of the face $f$ of $\kappa$ without the
pair of crossing edges.
Expand $\kappa$ in $\mathcal{E}(G_{\bullet})$. If there is exactly
one vertex $u$ of $\kappa$ with only incoming (multi-)edges, then
let $u = bottom(f)$. Choose the top vertex opposite to $u$, and the
left and right ends to the left and right of $f$. Similarly, choose
$v=top(f)$ if only $v$ has outgoing (multi-)edges and choose
$bottom(f)$ opposite to $v$. Otherwise, choose a pair of opposite
vertices so that $u=bottom(f)$ has incoming and $v=top(f)$ has
outgoing (multi-)edges and determine the left and right ends.

Clearly, each step takes linear time.
\end{proof}

We are now able to describe 1-planar graphs that admit a right angle
crossing drawing which is a step towards the intersection of
1-planar and RAC graph that is asked for in \cite{bdeklm-IC-16,
el-racg1p-13}.

\begin{theorem}  \label{thm:rhomboidal}
If $G$ is a 3-connected 1-planar graph so that the augmentation
$G_{\boxtimes}$ has a rhomboidal embedding with respect to a
canonical ordering, then $G$ is a RAC graph.
\end{theorem}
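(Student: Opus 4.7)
The aim is to produce a straight-line drawing of $G_{\boxtimes}$ in which every kite face is drawn as a quadrilateral whose two diagonals are axis-parallel. Then the two crossing edges of each kite, which are precisely these two diagonals, meet at a right angle, and hence $G$, as a subgraph of $G_{\boxtimes}$, inherits a RAC drawing.

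\paragraph{Construction outline.}
First I remove one edge from every crossing pair of $G_{\boxtimes}$ to obtain a planar subgraph $G^{\prime}$ that retains, for each kite $\kappa$ of $G_{\boxtimes}$, the $4$-cycle boundary of $\kappa$ as a quadrangular face. By the rhomboidal hypothesis, each such quadrangle is a rhomboid $(a_\kappa,b_\kappa,c_\kappa,d_\kappa)$ with respect to the given canonical ordering $\Pi$ of $G^{\prime}$. Next I draw $G^{\prime}$ by a shift-method algorithm in the style of de~Fraysseix--Pach--Pollack/Kant, adapted so that (i) the $y$-coordinate of every vertex $v$ equals the canonical-ordering index of the path $P_k$ that contains $v$, and (ii) for every rhomboidal kite face, the constraints $x(a_\kappa)=x(c_\kappa)$ and $y(b_\kappa)=y(d_\kappa)$ are enforced during the corresponding shift step. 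Finally I insert, for each kite $\kappa$, the two diagonals of its rhomboidal face as straight segments; by construction these are one vertical and one horizontal segment through the interior of the face, and they therefore meet at a right angle inside $\kappa$ and nowhere else.

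\paragraph{Supporting the alignment constraints.}
To justify that the two alignment equalities per kite can actually be enforced, I appeal to the structural lemmas of the previous section. By Lemma~\ref{lem:decidetrapezoid}, a rhomboid $(a_\kappa,b_\kappa,c_\kappa,d_\kappa)$ is either closed by a length-two path $P_{t}$ of $\Pi$, in which case $c_\kappa$ is placed simultaneously with a sibling lying on the face and the pair $b_\kappa,d_\kappa$ inherits equal $y$-coordinates from the structure of $\Pi$, or else it has both left- and right-support, in which case $b_\kappa$ and $d_\kappa$ are already adjacent to $a_\kappa$ on a common contour before $c_\kappa$ is inserted and their $y$-coordinates can be aligned by putting them in a common path of $\Pi$. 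Lemma~\ref{lem:bitonic} ensures that the bitonic arrangement of neighbours around each vertex puts $b_\kappa$ and $d_\kappa$ in mirror positions with respect to $a_\kappa$, so horizontal alignment of $a_\kappa$ and $c_\kappa$ is consistent with the shift algorithm's freedom in choosing horizontal offsets.

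\paragraph{Main obstacle.}
The hard part is verifying that, taken together, the alignment constraints arising from all kites are simultaneously satisfiable while planarity of the drawing of $G^{\prime}$ is preserved. In contrast to the IC-planar case, where kites are vertex-disjoint, a 3-connected 1-planar graph may contain kites sharing boundary vertices, so equalities $x(a_\kappa)=x(c_\kappa)$ from different kites can propagate around shared contour vertices and potentially overconstrain the shift. I expect the left/right-support dichotomy from Lemma~\ref{lem:leftish} to be the crux: it prevents two kites sharing a vertex from demanding incompatible horizontal placements at the same contour step, because each face is unambiguously of one alignment type. Once this consistency check is in place, the routine verification that no pair of non-crossing edges intersects and that every pair of crossing diagonals lies inside its rhomboidal face completes the construction.
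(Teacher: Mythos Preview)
Your high-level plan---draw the planar skeleton by a shift-type algorithm so that in every kite the diagonal $\{a,c\}$ becomes vertical and $\{b,d\}$ horizontal---matches the paper's approach. But two concrete steps in your execution do not work. First, if you remove only \emph{one} edge from each crossing pair, the remaining diagonal splits the kite into two triangles, so there is no quadrangular face left on which to run the rhomboid argument; the paper instead works with $G_{\square}$, where \emph{both} crossing edges are removed and the kite boundary survives as a genuine quadrangle. Second, and more seriously, your justification for $y(b_\kappa)=y(d_\kappa)$ is incorrect in both branches. When the rhomboid is closed by a length-two path $P_t$, that path contains the \emph{top} vertex $c_\kappa$, not the lateral vertices, so it says nothing about the heights of $b_\kappa$ and $d_\kappa$; and when the rhomboid has left- and right-support, $b_\kappa$ and $d_\kappa$ cannot be placed in a common path of $\Pi$ because the edge $\{b_\kappa,d_\kappa\}$ is absent from the planar skeleton and $a_\kappa$ lies between them on the contour. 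In the ordinary de~Fraysseix--Pach--Pollack/Kant drawing, $b_\kappa$ and $d_\kappa$ will in general sit at \emph{different} heights.

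The missing mechanism is the one the paper actually uses (imported from Case~1 of Theorem~2 in \cite{bdeklm-IC-16}): one does not rely on the canonical ordering to equalize $y(b_\kappa)$ and $y(d_\kappa)$, but instead performs an explicit \emph{leveling} step, lifting the lower of the two vertices by $2\delta$ extra left-shifts (where $\delta$ is the height discrepancy), and then, when $c_\kappa$ is placed, applies further shifts so that $c_\kappa$ lands vertically above $a_\kappa$. Crucially, the shift algorithm's invariant that subsequent shifts move $a_\kappa,b_\kappa,d_\kappa$ rigidly together with $c_\kappa$ is what guarantees the right-angle crossing persists and is what resolves your ``main obstacle'' about interacting kites---not Lemma~\ref{lem:leftish}, which concerns support types rather than coordinate compatibility. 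Without this leveling-by-extra-shifts argument (or an equivalent), the simultaneous satisfiability of all constraints $x(a_\kappa)=x(c_\kappa)$, $y(b_\kappa)=y(d_\kappa)$ remains unproved.
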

\begin{proof}
Our algorithm is a simplification of the technique used in   Case 1
of the proof of Theorem 2 in \cite{bdeklm-IC-16}, where more details
can be found.

The planar subgraph $G_{\square}$ of $G_{\boxtimes}$    is
3-connected \cite{abk-sld3c-13} and has a rhomboidal canonical
ordering by assumption.  Graph $G_{\square}$ is processed according
to the canonical ordering using the shift technique as in
\cite{fpp-hdpgg-90} and extended to 3-connected graphs in
\cite{k-dpguco-96}. For every quadrilateral face $f = (a,b,c,d)$ in
 clockwise order with $a=bottom(f)$ and $c=top(f)$, the algorithm
first places $a$, then $b$ and $d$ in any order, and finally $d$
according to the canonical ordering. Vertex $b$ is placed on the
$-1$-diagonal through $a$ to the left and $d$ is placed on the
$+1$-diagonal through $a$ to the right and at the intersection with
the $+1$ and $-1$ diagonal of the left lower and right lower
neighbors, respectively. The technique in \cite{bdeklm-IC-16} is a
leveling of $b$ and $d$. If $b$ is placed $\delta$ units below $d$,
or vice versa, then  lift $b$  to the   level of $d$   by $2\delta$
extra shifts to the left. If $b$ has been leveled with other
vertices, then the shift is synchronously applied to all vertices
that are leveled with $b$. Alternatively, on may apply the critical
(or longest) path method so that the  critical paths to $b$ and $d$
have the same length.
At the placement of $c = top(f)$ we shift $b$ to the left or  $d$ to
the right so that $c$ is placed vertically above $d$. Then edge
$\{b,d\}$ is inserted as a horizontal line and $\{a,c\}$ as a
vertical one. Later on, $b,a$ and $d$ are shifted by the same amount
as $c$, so that the right angle crossing of $\{b,d\}$ and $\{a,c\}$
is preserved.
\end{proof}

There are rhomboidal 1-planar graphs that are not NIC-planar, such
as $k \times k$ grids with a pair of crossing edges in each inner
quadrangle and a triangulation of the outer face. (The graphs are
not NIC-planar, because they have too many edges). On the other
hand, every IC-planar graph admits a rhomboidal embedding and we
have  a simpler proof than in \cite{bdeklm-IC-16}.

\begin{corollary}
Every IC-planar graph is a RAC graph.
\end{corollary}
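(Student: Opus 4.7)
The plan is to combine the preceding rhomboidal-embedding lemma with Theorem~\ref{thm:rhomboidal} and then exploit the fact that the RAC property is inherited by subgraphs. Let $G$ be an arbitrary IC-planar graph, fix any IC-planar embedding $\mathcal{E}(G)$, and augment it to a planar-maximal IC-planar embedding $\mathcal{E}(G_{\boxtimes})$ in normal form; as recalled in the paper, the augmentation $G_{\boxtimes}$ is 3-connected, and each pair of crossing edges of $G_{\boxtimes}$ is embedded as a kite. Since $G$ is a spanning subgraph of $G_{\boxtimes}$ (obtained by deleting edges), and a RAC drawing of $G_{\boxtimes}$ remains a RAC drawing after deletion of edges, it suffices to show that $G_{\boxtimes}$ is RAC.

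Next, I would invoke the previous lemma to produce a rhomboidal embedding of $G_{\boxtimes}$: pick any planar edge $\{s,t\}$ on the outer face, contract all kites to obtain the 3-connected planar graph $(G_{\boxtimes})_{\bullet}$, and compute an ordering of $(G_{\boxtimes})_{\bullet}$ that will extend (after kite-expansion) to a rhomboidal ordering of $G_{\boxtimes}$. To apply Theorem~\ref{thm:rhomboidal} I need this ordering to be a canonical ordering, not merely an st-numbering. For this I would use a leftish canonical ordering of $(G_{\boxtimes})_{\bullet}$, which exists and is computable in linear time by the results cited in Section~\ref{sect:prelim}, and then perform the kite-expansion exactly as in the proof of the lemma: for each contracted kite $v_\kappa$, choose the bottom/top and left/right vertices of the expanded kite consistently with the in/out orientation at $v_\kappa$, so that the resulting 4-cycle is a rhomboid. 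Because the kites of an IC-planar graph are vertex-disjoint, these local expansions do not interfere, and the extended ordering remains a canonical ordering of $G_{\boxtimes}$ whose every crossing kite is rhomboidal.

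Finally, applying Theorem~\ref{thm:rhomboidal} to $G_{\boxtimes}$ equipped with this rhomboidal canonical ordering yields a right-angle crossing drawing of $G_{\boxtimes}$, and restricting to the edges of $G$ yields a RAC drawing of $G$, as required.

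The main conceptual obstacle is the mismatch between the statement of the lemma, which produces a rhomboidal embedding with respect to an st-numbering, and Theorem~\ref{thm:rhomboidal}, which requires a canonical ordering. The fix is to verify that the construction in the lemma goes through verbatim when the st-numbering of $(G_{\boxtimes})_{\bullet}$ is replaced by a leftish canonical ordering, and that after kite-expansion the result is still canonical; this is essentially a bookkeeping observation, since each kite is replaced by a rhomboid whose bottom and top vertices can be inserted as singleton paths and whose left and right vertices can be inserted either as a two-element path or as two singletons respecting the canonical-ordering conditions. The rest of the argument is a direct appeal to the two preceding results.
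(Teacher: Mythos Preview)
Your approach is essentially the paper's: combine the rhomboidal-embedding lemma with Theorem~\ref{thm:rhomboidal} (applied to the 3-connected augmentation $G_{\boxtimes}$) and then restrict to $G$. The paper gives no explicit proof of the corollary and does not address the st-numbering versus canonical-ordering mismatch you flag; your proposed fix via a leftish canonical ordering of $(G_{\boxtimes})_{\bullet}$ followed by kite-expansion is the natural way to bridge that gap, and is in the spirit of how the paper handles orderings elsewhere.
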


\begin{corollary}
There are 3-connected 1-planar graphs and NIC graphs that do not
admit a rhomboidal canonical ordering.
\end{corollary}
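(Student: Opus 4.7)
The plan is to exhibit explicit counterexamples and verify by case analysis that they admit no rhomboidal canonical ordering.

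For the 3-connected 1-planar case, a natural candidate is $K_6$ equipped with any 1-planar embedding. Since $K_6$ has $15 > 3.6(6-2)$ edges it is 1-planar but not NIC-planar, and in every 1-planar embedding of $K_6$ there must be pairs of crossing edges that share a vertex. For the NIC-planar case, I would construct a 3-connected graph by attaching three $K_4$-kites around a central triangle so that consecutive kites share exactly one vertex, then augment with planar edges until 3-connectivity is reached; the density is easily kept below $3.6(n-2)$.

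To rule out rhomboidal canonical orderings, I would fix an arbitrary canonical ordering $\Pi$ of the chosen graph and show that some kite $\kappa$ yields a trapezoidal quadrangle $f_\kappa$ in $G_{\square}$. By Lemma \ref{lem:decidetrapezoid}, $f_\kappa$ is a rhomboid only if it has both left- and right-support in $\Pi$, or else its top is inserted via a path of length two. When a vertex $v$ is shared by two kites, Lemma \ref{lem:bitonic} (bitonicity of $v$'s upper neighbors) combined with the cyclic embedding at $v$ forces at least one of the two quadrangles incident to $v$ to lack either left- or right-support while its top vertex remains a singleton in $\Pi$, hence to become a trapezoid.

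The main obstacle is the global character of the claim: one must rule out \emph{every} canonical ordering, not just a generic one. My plan is, for each explicit example, to enumerate up to symmetry the choices of the initial edge $\langle v_1, v_2 \rangle$ on the outer face and of the sink $v_n$, and verify that the local bitonicity conflict survives in every case. If the direct enumeration proves unwieldy, a counting backup is to show that the number of shared-vertex pairs of kites exceeds what can simultaneously satisfy the bitonic-and-support condition, thereby forcing at least one trapezoid in any canonical ordering and ruling out rhomboidality.
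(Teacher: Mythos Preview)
Your approach is a direct combinatorial attack, whereas the paper's proof is a two-line indirect argument: Theorem~\ref{thm:rhomboidal} says that a 3-connected 1-planar graph with a rhomboidal canonical ordering is necessarily RAC, so by contraposition any non-RAC 1-planar (resp.\ NIC-planar) graph has no rhomboidal canonical ordering; such graphs are known to exist by \cite{del-dgrac-11} and \cite{bbhnr-NIC-16}. No explicit example or case analysis is needed.

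Beyond being more laborious, your proposal has real gaps. First, $K_6$ is a risky choice: $K_6$ \emph{is} a RAC graph, so Theorem~\ref{thm:rhomboidal} does not obstruct it, and nothing you wrote actually shows that $K_6$ lacks a rhomboidal canonical ordering --- it may well have one. You would need to pick a graph that is provably non-RAC (or otherwise prove the nonexistence directly), and you have not done either. Second, your use of Lemma~\ref{lem:bitonic} is off: that lemma concerns the \emph{leftish} canonical ordering specifically, and bitonicity of the neighbors above $v$ does not by itself force one of two quadrangles at $v$ to be a trapezoid; Lemma~\ref{lem:decidetrapezoid} decides the face type via support and the length of the covering path $P_{t_i}$, and two kites sharing $v$ can both be rhomboids if $v$ happens to be a left or right end (not the bottom) of both. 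Third, ruling out \emph{every} canonical ordering by enumeration up to symmetry is exactly the hard global step, and your counting fallback (``shared-vertex pairs of kites exceed what can simultaneously satisfy the bitonic-and-support condition'') is not made precise enough to be an argument.

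In short: use the contrapositive of Theorem~\ref{thm:rhomboidal} together with the cited existence of non-RAC 1-planar and NIC-planar graphs; that gives the corollary immediately.
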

\begin{proof}
There are 1-planar graphs  \cite{del-dgrac-11} and even NIC-planar
graphs \cite{bbhnr-NIC-16} that are not RAC and a rhomboidal
canonical ordering would contradict Theorem~\ref{thm:rhomboidal}.
\end{proof}

We now turn to rectangle visibility representations of IC-planar
graphs and the proof of Theorem~\ref{thm:IC-RVG}.

A visibility representation of a 2-connected planar graph $G$ is
commonly obtained by  the following steps
\cite{dett-gdavg-99,k-mcvr-97}, which we call VISIBILITY-DRAWER:
\begin{enumerate}
  \item Compute an st-numbering $\delta(v)$ for the vertices of $G$ with
an edge $\{s,t\}$ and $\delta(s)=1$ and $\delta(t)=n$.  Embed edge
$\{s,t\}$  at the left and orient the edges according to the
st-numbering.
  \item Compute the $s^*t^*$-numbering of the dual graph $G^*$
where $s^*$ is the face to the right of $\{s,t\}$ and $t^*$ is the
outer face.
  \item For an
 oriented  edge $e$ let left$(e)$ (right$(e)$) be the $s^*t^*$-number
of the face to the left (right) of  $e$. For a vertex $v \neq s,t$
let left$(v) =$ min$\{$left$(e) \, | \, e$ is incident to $v\}$ and
right$(v) =$ max$\{$right$(e) \, | \, e$ is incident to $v\}$.
\item For each vertex $v \neq s,t$ draw a bar between $(\textrm{left}(v),
\delta(v))$ and $(\textrm{right}(v)-1, \delta(v))$ and draw a bar
between $(0,0)$ and $(M-1,0)$ for $s$ and between $(0,n-1)$  and
$(M-1,n-1)$ for $t$ where $M \leq 2n-4$ is the number of faces of
$G$.
\item Draw each edge $e= \{u,v\} \neq \{s,t\}$
between $(\textrm{left}(e), \delta(u))$ and $(\textrm{left}(e),
\delta(v))$ and draw $\{s,t\}$ at $x=1$.
\end{enumerate}

 There is exactly one vertex at each level $y = 1, \ldots, n$ if
 the st-numbers are used for the $y$-coordinates of the vertices. More compact drawings
are obtained by using the critical path method or topological
sorting \cite{dett-gdavg-99, k-mcvr-97}. The drawings are not really
pleasing, since many lines of sight are at the   ends of the bars.
There are no degenerated faces since the right end of a bar to the
left of   face $f$ is at least one unit to the left and of a bar to
the right of $f$. The drawing algorithm preserves the given
embedding.

  The change of the embedding at
B-configurations can be undone by modifying  the computation of the
$x$-coordinates of the lines of sight. Before the computation of the
dual $s^*t^*$-numbering add a copy of the rerouted $\{s,t\}$ edge at
its original place and remove the edge(s) that were added for the
3-connectivity and compute $\delta^*$ on the new embedding. The line
of sight for the edge $\{s,t\}$ can be drawn at several places
\cite{b-vr1pg-14} and we must choose the   one that preserves the
embedding.

\begin{algorithm}
  \caption{IC-RV-DRAWER}\label{alg:IC-RVG}

  \KwIn{An IC-planar embedding $\mathcal{E}(G)$.}

  \KwOut{A rectangle visibility representation $\mathcal{RV}(G)$.}

  Transform $\mathcal{E}(G)$ into a normal form embedding $\mathcal{E}(G_{\boxtimes})$.\;
  Compute the planar graph $G_{\square}$ and a rhomboidal st-numbering $\delta$ of $G_{\square}$. \;
   Compute   an $s^*t^*$-numbering $\delta^*$ of the dual graph  $G_{\square}^*$.   \;
  \ForEach{vertex v of $G_{\square}$} {
      \If{$v$  is the left (right) end   of a rhomboid and $u$ is the other end} {%
            $d(v) = \delta(u)+\delta(v)$
            }
      \Else { $d(v) = 2\delta(v)$}
     }
  Compute a planar visibility representation of $G_{\square}$ by \\ \quad VISIBILITY-DRAWER  with vertices on level $d(v)$
  and edges at left$(e)$.
    \ForEach{pair of crossing edges $\{a,c\}$ and $\{b,d\}$  \textrm{in a} \\
    \quad  \textrm{rhomboid}
    $f=(a,b,c,d)$ \textrm{with} $d(a) <  d(b)= d(d) < d(c)$ } {
   Add a horizontal line  of sight   at level $d(b)$  between the bars of  $b$
   and $d$.\;
   Add a vertical line line of sight at $\delta^*(f)+0.5$ between \\ \quad the bars of $a$ and
   $c$.
   }
   Scale all $x$-coordinates by two.\;
   Remove (or ignore) all lines of sight of edges not in $G$.
\end{algorithm}

The following Lemma concludes the proof of Theorem~\ref{thm:IC-RVG}.

\begin{lemma} \label{lem:correctnessALG1}
Algorithm IC-RV-DRAWER constructs a rectangle visibility
representation of an IC-planar graph on $O(n^2)$ area  and operates
in linear time.
\end{lemma}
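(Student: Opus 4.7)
The plan is to follow the algorithm line by line and verify three things: (i) every step runs in linear time; (ii) the modified VISIBILITY-DRAWER produces a valid planar visibility representation of $G_{\square}$; (iii) the subsequent insertion of horizontal and vertical sight lines correctly realises every crossing pair of $G$. Linearity is immediate from Lemma~\ref{lem:normalform} and the preceding lemmas on rhomboidal st-numberings, together with the fact that the dual numbering $\delta^*$, the $d$-assignment, and the final loop each touch a graph of size $O(n)$ with at most $n/4$ rhomboids.

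The main obstacle is step (ii): showing that using $d(v)$ as the $y$-coordinate still yields a valid visibility representation of $G_{\square}$. This amounts to $d$ being strictly monotone along every directed edge of $G_{\square}$. For the four edges of a rhomboid $(a,b,c,d)$ the chain $d(a) = 2\delta(a) < \delta(b)+\delta(d) = d(b) = d(d) < 2\delta(c) = d(c)$ is immediate from $\delta(a) < \delta(b), \delta(d) < \delta(c)$. For any other edge $\{b,w\}$ incident to a rhomboid end, direct monotonicity is the delicate point; I would verify it from the rhomboidal normal-form embedding (using the fact that $w$ lies in a face adjacent to $b$ outside the rhomboid), or failing that replace $d$ by any topological extension of the partial order induced by $G_{\square}$ that preserves the equality $d(b) = d(d)$ at each rhomboid end.

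Given (ii), step (iii) reduces to a local check at each rhomboid face $f = (a,b,c,d)$. Since $b$ and $d$ are non-adjacent in $G_{\square}$ but both incident to $f$, no other bar lies at level $d(b) = d(d)$ inside the $x$-range spanned by $f$, so the horizontal sight line between them realises $\{b,d\}$ unambiguously. The dual coordinate $\delta^*(f) + 0.5$ lies strictly inside the column occupied by $f$, so the vertical sight line between $a$ and $c$ at this abscissa passes through no intervening bar---only the bars on $f$'s boundary touch this column, and the bars of $b$ and $d$ at the common middle level do not extend into its interior---thus realising $\{a,c\}$. Doubling the $x$-coordinates at the end separates the new horizontal and vertical sight lines from the surrounding edge columns and avoids spurious visibilities.

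For the area bound, the $x$-extent after doubling is at most $2M \le 4n - 8$ and the $y$-extent is bounded by $\max_v d(v) \le 2n$, so the drawing occupies $O(n^2)$ area; combined with the linearity established above, this completes the proof.
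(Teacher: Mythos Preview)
Your proposal follows the same overall approach as the paper: run VISIBILITY-DRAWER on $G_{\square}$ with the modified level function $d$, then insert the two crossing edges locally into each rhomboid. You also correctly isolate the one genuinely delicate step---monotonicity of $d$ along every directed edge of $G_{\square}$---which the paper dispatches only by the phrase ``$d$ is a weighted topological sorting.''

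Where your argument remains incomplete is precisely there: you leave the monotonicity for an edge $\{b,w\}$ with $w$ outside the rhomboid as something you ``would verify'' or else patch by an unspecified topological extension. The missing observation is that the rhomboidal st-numbering $\delta$ is constructed by expanding an st-numbering of the kite-contraction $G_{\bullet}$, so the four vertices $a,b,c,d$ of each kite receive \emph{consecutive} $\delta$-values. Hence any neighbor $w$ of $b$ in $G_{\square}$ that is not in this kite satisfies $\delta(w)<\delta(a)$ or $\delta(w)>\delta(c)$. In the first case $d(w)<2\delta(a)<\delta(b)+\delta(d)=d(b)$, and this holds even when $w$ is itself a left/right end of another kite, since that kite's four $\delta$-values are again consecutive and all below $\delta(a)$; the second case is symmetric. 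With this, $d$ is strictly monotone on every edge of $G_{\square}$, your step~(ii) is secured, and the remainder of your argument (the local insertion in step~(iii), the area bound, and linear time) goes through exactly as you wrote it and matches the paper.
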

\begin{proof}
The algorithm computes a planar visibility representation of
$G_{\square}$ in linear time as proved  in \cite{dett-gdavg-99,
rt-rplbopg-86, tt-vrpg-86} on an area of size   $(2n-5) \times 2n$,
which   is scaled by a  factor  of two in $x$-dimension.

Each pair of crossing edges  is in a kite of $G_{\boxtimes}$ whose
boundary is embedded as a rhombus $f=(a,b,c,d)$   with
$a=bottom(f)$. Then
 the y-coordinates of $b$ and $d$ coincide and $d$ is a weighted topological sorting as
 used in \cite{dett-gdavg-99}.
There is a gap of
 one unit  between the bars of $b$ and $d$, since the bar of $b$ ends at $\delta^*(f)-1$ and
 the bar of $d$ begins at $\delta^*(f)$. Now, edges $\{a,c\}$ and
$\{b,d\}$ are added so that they cross inside $f$.

Since $G$ has at most $13/4n-4$ edges, the transformation into
normal form takes linear time so that $G_{\boxtimes}$ and
$G_{\square}$ have size $O(n)$. The   visibility representation of
$G_{\square}$ is computed in linear time \cite{dett-gdavg-99}. There
are at most $n/4$ pairs of crossing edges which are each inserted in
$O(1)$ time.
\end{proof}

\section{T-Visibility of 1-planar Graphs} \label{sect:proof2}

For the \textsf{T}-visibility representation of 1-planar graphs we
use a leftish canonical ordering as an st-numbering and  draw
$G_{\square}$  by VISIBILITY-DRAWER. Note that $G_{\square}$ may
have multi-edges at separation pairs, which each introduces a face.
In total,  $G_{\square}$ has at most $2n-4$ faces, since each
multi-edge could be substituted by a planar edge. For the pairs of
crossing edges we expand some vertices to a $\bot$-shape. A
$\bot$-shaped vertex consists of a horizontal bar and a vertical
\emph{pylon}. By a horizontal flip we obtain a \textsf{T}-shape
visibility representation. If vertex $v$ is $\bot$-shaped, then the
 pylon  is inserted into the face of a left- or right-trapezoid $f$
with $v = bottom(f)$ and   $top(f)$ is maximum.   For each
quadrangle  $f= (v, b,c,d)$, the  edges $\{v,c\}$ and $\{b,d\}$ were
removed in the planarization step. They are reinserted as follows:
If $f$ is a rhomboid, then the lower of $b$ and $d$ gets a pylon for
an $\llcorner$- or $\lrcorner$-shape    so that $\{b,d\}$ is a
horizontal line that is crossed by
 a vertical line of sight for $\{a,c\}$ inside $f$.
If $f$ is a left-trapezoid, then  the bar of $b$ is extended to the
right and  edge $\{b,d\}$  is added as a vertical line of sight
inside $f$. Accordingly, extend the bar of $d$ to the left if $f$ is
a right-trapezoid.  The particularity is the drawing of edge
$\{v,c\}$ as a horizontal line of sight from the pylon of $v$ to the
bar (or pylon) of $c$, as depicted in Fig.~\ref{fig:rhomboids}. This
line of sight is unobstructed, since there is exactly one bar on
each level by the use of $st$-numbers for the $y$-coordinate  of the
vertices and there is no obstructing pylon from another vertex by
the use of the leftish canonical ordering and the bitonic order of
the vertices above a vertex $v$, as stated in Lemma
\ref{lem:bitonic}.

\begin{figure}
  \centering
  \subfigure[ ]{
    \parbox[b]{4.1cm}{%
      \centering
      \includegraphics[scale=0.7]{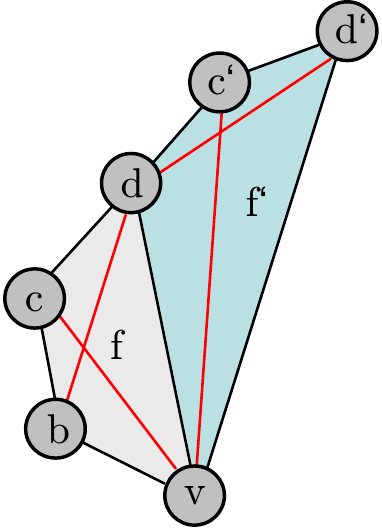}
    }
    \label{fig:kite1}
  }
  \hfil
  \subfigure[ ]{
    \parbox[b]{4.1cm}{%
      \centering
      \includegraphics[scale=0.9]{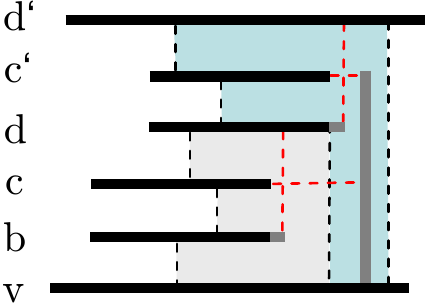}
    }
    \label{fig:B-config}
  }

  \caption{(a) Two left-trapezoids $f=(v, b,c,d)$ and
  $f'=(v,d,c',d')$ and (b) their visibility representation.}
  \label{fig:rhomboids}
\end{figure}

There is a special case at separation pairs and W-configurations,
see Fig.~\ref{fig:XWdouble}. If $G-\{x,y\}$ partitions into an outer
component $H_0$ and inner components $H_1,\ldots, H_p$, then
VISIBILITY-DRAWER places the inner components from left to right
between the bars of $x$ and $y$ and separates them in $x$-dimension
by the $st$-numbering and in $y$-dimension by the
$s^*t^*$-numbering. It  admits a representation of the copies of
edge $\{x,y\}$ by vertical lines of sight.

Consider the planarization $\widehat{H}_{\square}$ of an inner
component $H$ together with the separation pair $x,y$. The
 outer face of $\widehat{H}_{\square}$
 is a
quadrangle $f=(x, b,c,y)$ which is embedded as a left-trapezoid with
a copy of $\{x,y\}$ on the right. The vertex numbering from the
leftish canonical ordering is $x < b < w < c < y$, where $w \neq
b,c$ is any other vertex of $H_i$. Thus, $b$ and $c$ are the first
and last vertex of $H_i$. They are not unique, since the embedding
of $H_i$ can be flipped. In particular, $P_{q-1} = \langle c
\rangle$  and $P_q = \langle t \rangle$ are the last two paths in
the leftish canonical ordering of $\widehat{H}_{\square}$, since
there is no separating triangle $\Delta = (t,c,w)$ for some vertex
of $H$. Hence, $b$ and $c$ are placed on the lowest and highest
levels of the vertices of $H$. The outer edge  $\{b,t\}$ is
represented as a vertical line of sight between the bars of $b$ and
$t$ as if it were a planar edge, whereas    edge $\{x,c\}$ is a
horizontal line of sight from the pylon of $x$ to the pylon of $c$.
Algorithm T-DRAWER constructs the visibility representation.

\begin{figure}
   \begin{center}
     \includegraphics[scale=0.55]{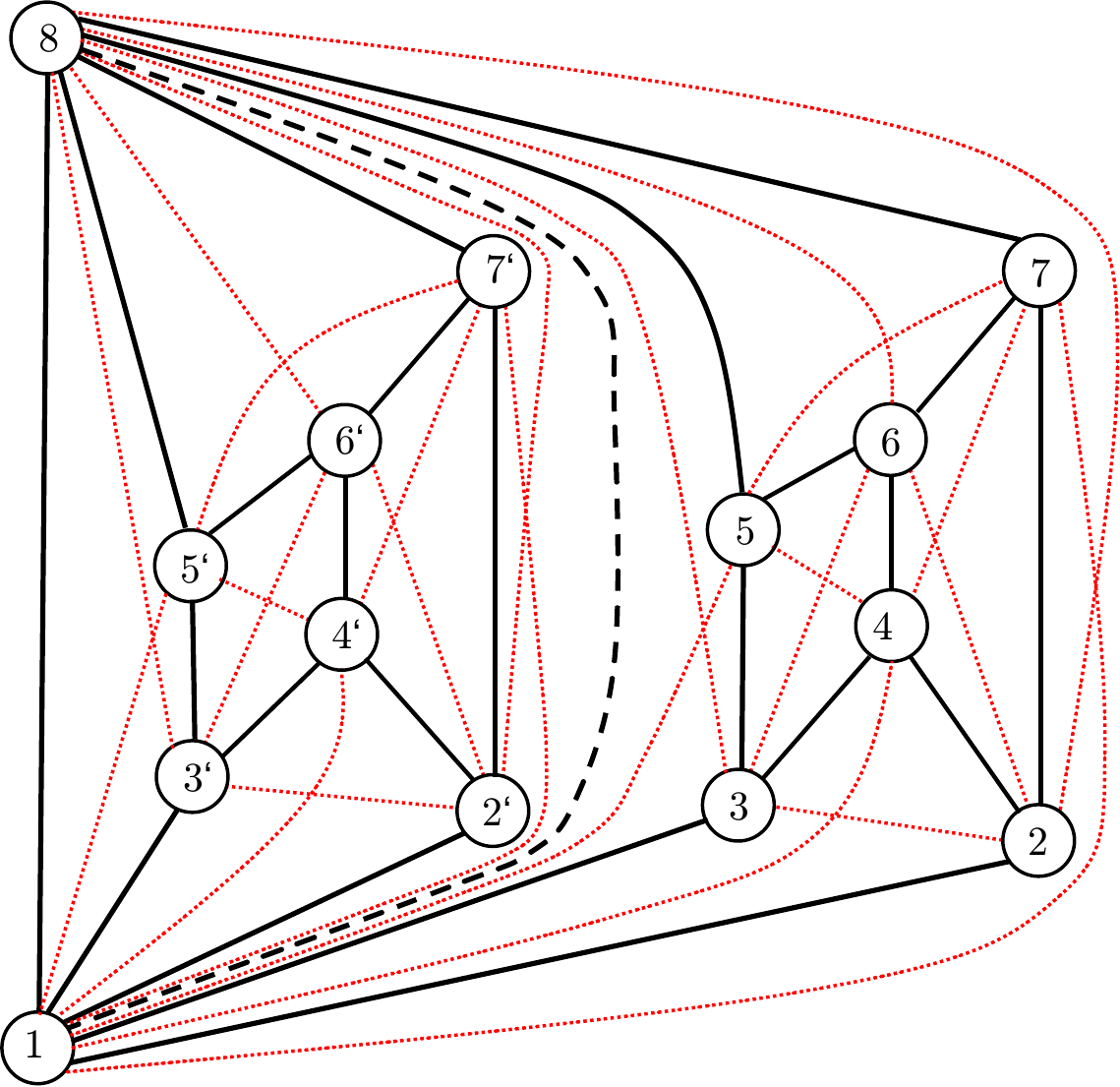}
  \caption{
A graph $DXW$ consisting of two copies of the extended wheel graph
$XW_6$ with   vertices $1, \ldots, 8$ and $1, 2',\ldots, 7', 8$.
Planar edges are drawn black and bold and crossing edges red and
dotted. The copy of edge $\{1,8\}$ is drawn dashed.
     The extended leftish canonical ordering is $\Pi = (\langle 1,2 \rangle, \langle 3,4 \rangle, \langle 5,6 \rangle,
     \langle 7 \rangle, \langle (1), 2' \rangle, \langle 3', 4' \rangle, \langle 5', 6' \rangle, \langle 7' \rangle,
     \langle 8 \rangle)$. }
     \label{fig:XWdouble}
   \end{center}
\end{figure}

\begin{algorithm}
  \caption{T-DRAWER}\label{alg:IC-RVG}

  \KwIn{A 1-planar embedding $\mathcal{E}(G)$.}

  \KwOut{A T-visibility representation $\mathcal{TVR}(G)$.}

  Compute  $\mathcal{E}(G_{\boxplus})$ from $\mathcal{E}(G)$.\;

  Compute $G_{\square}$ from $\mathcal{E}(G_{\boxplus})$ by removing all pairs of crossing edges.\;

  Compute an $st$-numbering $\delta$ of $G_{\square}$ as an extension
  of a leftish canonical \\ \quad ordering of each 3-connected component. \;
  Compute the $s^*t^*$-numbering $\delta^*$ of the dual graph
  $G_{\square}^*$. \;

  Compute the planar visibility representation of $G_{\square}$ by \\ \quad VISIBILITY-DRAWER.\;

  \ForEach{vertex $v$ with quadrangles $f$ such that $v =bottom(f)$ } {
    \textbf{int} $v_{max} = 0$;  \textbf{face} $f_{max}$ \;
      \ForEach{left-trapezoid $f=(v,b,c,d)$} {
        \If{$v_{max} < c$} {
         $v_{max} =   c$; \,
         $f_{max}=f$
         }
        Extend the bar of $b$ by $1/3$  to the right and at its right end
        \;
        add a vertical line of sight for $\{b,d\}$.
  }
 \ForEach{right-trapezoid $f=(v,d,c,b)$} {
      \If{$v_{max} < c$} {
         $v_{max} = c$; \,
         $f_{max}=f$
         }
        Extend the bar of $b$ by $1/3$  to the left and  at its left end \;
        add a vertical line of sight for $\{b,d\}$.
   }
  \ForEach{rhombus $f=(v,b,c,d)$} {
       Enlarge the bar of the lower of $b$ and $d$ by a pylon  at an end   \\ \quad
        and inside $f$ and up to the
       bar of the upper vertex. \;
       Add a horizontal  line of sight for $\{b,d\}$  at the top of the pylon.\;
      Add a vertical line of sight for $\{v,c\}$ between the bars of
      $v$ and $c$ \\ \quad at $\delta^*(f)+1/3$\;
    }

  \If{$v_{max} \neq 0$} {
       Enlarge the bar of $v$ by a pylon inside $f_{max}(v)$
       from \\ \quad $(\delta^*(f_{max}(v))+1/3, \delta(v))$ to  $(\delta^*(f_{max}(v))+1/3 , \delta(v_{\max}))$\;
      \ForEach {trapezoid  $f=(v,b,c,d)$ } {%
          \If {$f$ is not a left-trapezoid with  a separation pair  $[v,d]$ }
          {
            Add a horizontal  line of sight for $\{v,c\}$ \\ \quad from the
           pylon  to the bar of $c$.
      }
      \Else (\algcom{the horizontal line of sight may be occupied}) {
      Enlarge the bar of $c$ by a pylon of height $1/2$.  \;
      Add a horizontal  line of sight for $\{v,c\}$  at the top of the pylon of $c$.
} } } }
Scale all $x$-coordinates by three and all y-coordinates by two.\;
\end{algorithm}

 \pagebreak

\begin{lemma} \label{lem:pylonSee}
Suppose   $G$ is a 3-connected 1-planar graph and there is no
W-configuration in the outer face of an embedding of $G$. For a
vertex $v$ on a contour $C_k$, let $f_{j_1}(v),\ldots,
f_{j_{\mu}(v)}$ be the sequence of left- and right-trapezoids above
$v$ from left to right with $f_i(v) = (v, b_i, c_i, d_i)$. Let
$v_{max} = \max \{ c_i \, | \, i=j_1,\ldots, j_{\mu}\}$ and let
$f_{max}(v)$ be the trapezoid containing $v_{max}$.

Then the  pylon  of $v$  inside   $f_{max}(v)$
 can see the bar  of each vertex $c_i$ for $i \in \{j_1,\ldots, j_{\mu}\}$.
\end{lemma}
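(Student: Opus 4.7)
The plan is to verify three facts about the horizontal segment at level $\delta(c_i)$ running from the pylon of $v$ to the bar of $c_i$: that the pylon actually reaches this level, that no other bar lies on the segment, and that no other pylon obstructs the segment.

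First I would record the geometry. By construction, the pylon of $v$ sits inside $f_{max}(v)$ at $x$-coordinate $\delta^*(f_{max}(v)) + 1/3$ and spans the vertical interval $[\delta(v), \delta(v_{max})]$. Since $v_{max} = \max_i c_i$, the pylon rises to at least level $\delta(c_i)$ for every $i$, so a horizontal ray at level $\delta(c_i)$ can be emitted from it. By the symmetry of the construction (left- and right-trapezoids are treated analogously), it suffices to handle the case where $f_{j_i}(v)$ lies to the left of $f_{max}(v)$ in the clockwise order of faces above $v$; by Lemma \ref{lem:leftish}, $f_{j_i}(v)$ is then a left-trapezoid.

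Next I would reduce the visibility to unobstructedness of the horizontal segment from $x = \delta^*(f_{max}(v)) + 1/3$ leftward to $x = \delta^*(f_{j_i})$ at height $\delta(c_i)$. Because $f_{j_i}(v)$ is incident to $c_i$, the bar of $c_i$ contains the column $x = \delta^*(f_{j_i})$, so the segment ends on $c_i$'s bar. To rule out bar obstructions, I would invoke the fact that VISIBILITY-DRAWER uses the st-numbering for $y$-coordinates, so every level holds a unique bar; at level $\delta(c_i)$ this unique bar is $c_i$'s own, and no other bar can block.

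The delicate step, and what I expect to be the main obstacle, will be ruling out an obstructing pylon. My approach is by contradiction: assume a vertex $u$ has a pylon at some $x \in (\delta^*(f_{j_i}), \delta^*(f_{max}(v)) + 1/3)$ that crosses level $\delta(c_i)$; then $\delta(u) < \delta(c_i) \leq \delta(u_{max})$, and $f_{max}(u)$ occupies a dual position between $f_{j_i}(v)$ and $f_{max}(v)$. I would then use the bitonic structure of the clockwise sequence of $v$'s neighbors above $C_k$ (Lemma \ref{lem:bitonic}) together with the leftish rule (Lemma \ref{lem:leftish}) to show that no such $u$ exists: any interior vertex $u$ located between $f_{j_i}(v)$ and $f_{max}(v)$ on the contour would, by leftishness, have its top $u_{max}$ placed before $v_{max}$ along a path that covers only vertices of $v$'s contour segment strictly below $\delta(c_i)$; otherwise $u_{max}$ would be a feasible earlier candidate violating the leftish selection rule relative to $v_{max}$. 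This gives $\delta(u_{max}) < \delta(c_i)$, contradicting that $u$'s pylon crosses level $\delta(c_i)$. The case analysis splits into $u$ being a direct neighbor of $v$ (one of the $b_\ell$'s or $d_\ell$'s of trapezoids above $v$) or a vertex deeper in the contour; in both cases the combinatorial constraints from the leftish ordering and bitonicity yield the contradiction, completing the proof.
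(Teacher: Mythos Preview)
Your three-check framework (pylon height, no bar at level $\delta(c_i)$, no obstructing pylon) is exactly how the paper organizes the argument, and your treatment of the first two checks matches the paper's: the pylon reaches $\delta(v_{max})\ge\delta(c_i)$ by definition, and the st-numbering puts one bar per level.

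The third check is where you diverge, and where the gap lies. Your plan assumes that if some foreign pylon sits at an $x$-coordinate between $\delta^*(f_{j_i})$ and $\delta^*(f_{max}(v))+1/3$, then $u$ must lie ``on $v$'s contour segment'', after which you try to bound $\delta(u_{max})$ via a leftishness/feasible-candidate argument. But the inference from ``$\delta^*(f_{max}(u))$ falls in that numerical range'' to ``$u$ sits on the contour between $f_{j_i}(v)$ and $f_{max}(v)$'' is not justified: the $s^*t^*$-numbering is only a topological sort of the dual, and a face with dual number in a given interval need not be geometrically near $v$. Your subsequent sketch (``$u_{max}$ would be a feasible earlier candidate violating the leftish rule relative to $v_{max}$'') does not hold up as stated either; leftishness compares left neighbours of feasible candidates at a fixed step, not heights of pylons of unrelated vertices.

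The paper avoids all of this with a simpler structural observation you are missing. Because $f_{j_i}(v)$ is a left-trapezoid, $c_i<d_{j_i}=w_{j_i}$, and by bitonicity (Lemma~\ref{lem:bitonic}) every neighbour $w_j$ of $v$ between $f_{j_i}$ and $f_{max}$ satisfies $w_j>c_i$. Hence at height $\delta(c_i)$ the horizontal segment lies strictly below the bars of all those $w_j$ and strictly above the bar of $v$: it runs entirely through the fan of faces $f_{j_i}(v),\ldots,f_{max}(v)$, every one of which has bottom $v$. The only vertical objects inside those faces are the lines of sight for edges $\{v,w\}$ (which are not pylons) and the short rhomboid pylons of some $b_j$ or $d_j$, whose base is at level $\min(b_j,d_j)=w_{j-1}>c_i$ and therefore never meets height $\delta(c_i)$. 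A main pylon of any other vertex $u$ lives in $f_{max}(u)$, whose bottom is $u\ne v$, so it is simply not one of these faces; no bound on $\delta(u_{max})$ is needed. Rewriting your third step along these lines (stay-in-the-fan plus bitonicity) closes the gap and matches the paper.
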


\begin{proof}
By Lemma \ref{lem:bitonic}, there is a bitonic sequence of clockwise
neighbors of $v$, which each has is own $y$-coordinate
 according to the leftish canonical ordering. Hence, a horizontal
 line of sight from the pylon of $v$ is
unobstructed by bars of other vertices. A horizontal line  of sight
from the pylon to $c_i$ intersects only vertical lines of sight of
planar edges $\{v, w\}$ with $c_i < w$ and $c_i$ and $w$ are on  the
same side of the pylon, i.e., $c_i, w < w_m$ or $c_i, w > w_m$,
where $w_m$ is the maximum neighbor of $v$ (or the top vertex of
$f_{j_{\mu}}(v) $) in the leftish canonical ordering. Hence, a line
of sight $\{v, c_i\}$ is unobstructed by pylons of other vertices.
In consequence, each edge $\{v, c_i\}$ with $i \in \{j_1,\ldots,
j_{\mu}\}$ is represented in the visibility representation
constructed by T-DRAWER.
\end{proof}

Finally, consider a separation pair $[x,y]$ with inner components
$H_1,\ldots, H_p$.
The st-numbering extending the leftish canonical ordering of
3-connected components inserts the vertices of each component
consecutively and just before $y$ so that there is a subsequence $x,
H_0', H_1, \ldots, H_p, y$, where $H_0'$ is a subgraph of the outer
component that is added by the leftish canonical ordering   between
  $x$ and $y$. Each component $H_i$ is drawn in a box
$B(H_i)$ and the boxes are ordered monotonically in $x$- and in
$y$-dimension to a staircase between the bars of $x$ and $y$ both by
the common visibility drawer and by  T-DRAWER, as illustrated in
Fig.~\ref{fig:XWplanarVis}.

Consider the outer face of an inner component including the
separation pair. Without crossing edges, there is a quadrangle
$f_{out}(H) = (x, b,c,y)$, which is embedded as a left-trapezoid.
However, $f_{out}(H)$ has no left-support, since $x<b<c < y$ in the
leftish canonical ordering and edge $\{x,y\}$ of $f_{out}(H)$ is a
copy of the original edge. This case is treated as an
\emph{exception}. Edge $\{b,y\}$ is drawn inside $f_{out}(H)$ and to
the right of $H$ after an extension of the bar of $b$ to the right.
Vertex $x$ is $\bot$-shaped with a high pylon up to $y$ which is
placed in the face to the right of the original edge $\{x,y\}$.
 The pylon can see all vertices that
are neighbors of $x$ in the trapezoids of $H$  by Lemma
\ref{lem:pylonSee}. However, the horizontal line of sight to $c$ may
be occupied, as in Fig.~\ref{fig:XWTVis}. Fortunately, $c$ is the
last vertex of $H$ in the leftish canonical ordering and a short
pylon for the bar of $c$ admits a horizontal   line of sight between
$x$ and $c$. Since the inner components  are separated in
$y$-dimension, the pylon of $x$ can see all neighbor of $x$ in the
trapezoids of the inner components.

 The following Lemma concludes the proof of Theorem \ref{thm:1-cat}.

\begin{lemma} \label{lem:correctALG2}
Algorithm T-DRAWER constructs a \textsf{T}-visibility representation
of a 1-planar graph on $O(n^2)$ area  and operates in linear time.
\end{lemma}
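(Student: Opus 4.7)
The plan is to reduce the proof to two components: (i) correctness of the planar visibility representation of $G_{\square}$ produced by VISIBILITY-DRAWER, which is standard and cited from \cite{dett-gdavg-99, rt-rplbopg-86, tt-vrpg-86}, and (ii) correctness of the local modifications that reinsert each pair of crossing edges. Together with a separate bound on the size of $G_{\square}$ and the work per crossing, these yield the stated area and running time.

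For (ii), I would argue face by face. By Lemma \ref{lem:normalform}, every pair of crossing edges of $G_{\boxplus}$ lies in a kite whose boundary becomes a quadrangle $f=(v,b,c,d)$ of $G_{\square}$, and by Lemma \ref{lem:decidetrapezoid} $f$ is a rhomboid, left-trapezoid, or right-trapezoid under the leftish canonical st-numbering. For a rhomboid, $b$ and $d$ are on distinct $y$-levels so the lower one can be given a short pylon, producing a horizontal line of sight for $\{b,d\}$ that crosses the vertical line of sight for $\{v,c\}$ inside $f$; these line segments are confined to the open region of $f$ in the planar drawing of $G_{\square}$, so no other bar or pylon interferes. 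For a trapezoid, the edge $\{b,d\}$ is replaced by a short horizontal extension plus a vertical segment inside $f$, and $\{v,c\}$ is realized by a horizontal line of sight from the pylon attached to $v$ inside $f_{\max}(v)$. The crucial point is that this pylon actually sees every $c_i$ of every trapezoid above $v$; this is exactly Lemma \ref{lem:pylonSee}, whose hypothesis is met because the bitonic order of neighbors of $v$ from Lemma \ref{lem:bitonic} guarantees that no pylon of another vertex blocks the horizontal ray at the height of $c_i$.

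The main obstacle is the separation-pair/W-configuration case. For an inner component $H$ the outer face $f_{out}(H)=(x,b,c,y)$ is a left-trapezoid \emph{without} left-support, so the generic argument does not directly apply and the horizontal line of sight from the pylon of $x$ to $c$ may already be occupied by the bar of $y$ or another component. I would handle this exactly as the algorithm prescribes: place the pylon of $x$ in the face to the right of the original edge $\{x,y\}$, so by the staircase layout of the boxes $B(H_i)$ the pylon has a clear horizontal corridor across the strip of $H$; for the specific edge $\{x,c\}$, give $c$ a short pylon of height $1/2$ so that the line of sight lies one half-unit above the topmost bar of $H$ but below the bar of $y$. One must verify that these half-unit pylons do not interact across different inner components, which follows from the fact that the components are separated in both $x$- and $y$-direction by the $st$- and $s^*t^*$-numberings.

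For area and time, Lemma \ref{lem:normalform} gives $|E(G_{\boxplus})|\le 4n-8$, so $G_{\square}$ has $O(n)$ vertices, $O(n)$ edges, and at most $2n-4$ faces; VISIBILITY-DRAWER therefore returns a drawing on an $O(n)\times O(n)$ grid in linear time. Each crossing pair is inserted by $O(1)$ local operations, there are $O(n)$ of them, and the final constant scaling in $x$ and $y$ preserves both the $O(n^2)$ area bound and the linear-time guarantee. This completes the proof sketch.
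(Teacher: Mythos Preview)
Your proposal is correct and follows essentially the same approach as the paper: reduce to the planar VISIBILITY-DRAWER for $G_{\square}$, then reinsert crossing pairs case-by-case (rhomboid vs.\ trapezoid) using Lemma~\ref{lem:pylonSee}, with a special treatment for the outer face of inner components at separation pairs, and finish with the straightforward area and time bounds. The one point the paper makes slightly more explicit than you do is the recursive decomposition-tree argument: because the $st$-numbering groups components recursively, no horizontal line of sight from a pylon of an outer-component vertex ever needs to pass \emph{through} an inner component, so pylons inside inner components cannot obstruct it; you gesture at this with the staircase/separation remark, but it is worth stating for the nested case.
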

\begin{proof}
The computations  of   $G_{\boxplus}$, the removal
 of all pairs of crossing edges for $G_{\square}$, the st-numbering as an extension of a leftish canonical
 ordering, the $s^*t^*$-numbering and the planar visibility
 representation of $G_{\square}$ each take linear time if a 1-planar
 embedding of $G$ is given.
There are
 at most n-2 pairs of crossing edges which can each be inserted in
 $O(1)$ time into the visibility
 representation of $G_{\square}$. Hence, T-DRAWER runs in liner
 time. The visibility
 representation of $G_{\square}$ has size at most $(2n-5) \times n$, which is expanded by a factor of six.

 The common visibility drawer provides a correct visibility
 representation of $G_{\square}$. For each 3-connected component
 without a W-configuration,
 the pairs of crossing edges are correctly added to the visibility
 representation by Lemma \ref{lem:pylonSee}. The pair of
 edges crossing in the outer face of a W-configuration is visible by
 the special treatment in lines 34 and 35. Since inner components at a
 separation pair $[x,y]$ are strictly separated in both dimensions
 and are placed between the bars (shapes) of $x$ and $y$, there is a line of sight
 between the shapes of $x$ and $y$ for each edge
 between $x$, $y$ and vertices of inner components.
 Finally, consider the decomposition tree.
If $H$ is an inner component at a separation pair $[x,y]$, then
there is no edge $\{u,v\}$ from a vertex $u$ with $x \neq u \neq y$
of the outer component to a vertex $v$ of $H$  and, hence, there is
no need for a line of sight. In addition, there is no need for a
horizontal line of sight from a pylon through the visibility
representation of an inner component, since the st-numbering groups
components recursively and thereby separates them. Hence, the pylons
in inner components do not obstruct horizontal lines of sight from
pylons of vertices of the outer component.
\end{proof}

It is important to use weak visibility, since a pylon can see the
bars and pylons of many other vertices, which is forbidden in the
strong visibility version.

As an example, consider the extended wheel graph $XW_6$
\cite{s-s1pg-86} and then take copies of it and identify two
vertices, here 1 and 8. These graphs have been used for the
construction of sparse maximal 1-planar graphs
\cite{begghr-odm1p-13} and for a linear lower bound on the number of
legs (vertex complexity) in embedding-preserving caterpillar-shape
visibility representations \cite{ddelmmw-ovreg-16}. Graph $XW_6$ can
be seen as a cube in 3D in which each face contains a pair of
crossing edges.

The visibility representation of $G_{\square}$ from the common
visibility drawer is displayed in Fig.~\ref{fig:XWplanarVis} and the
$\bot$-shape visibility representation of  T-DRAWER  in
Fig.~\ref{fig:XWTVis}. Note that the graphs even admit a rectangle
visibility representation (use the high pylons of vertices $1, 2,
5,2',5'$ and fill $4$ and $4'$ to a rectangle  in
Fig.~\ref{fig:XWTVis}).

\begin{figure}
   \begin{center}
     \includegraphics[scale=0.5]{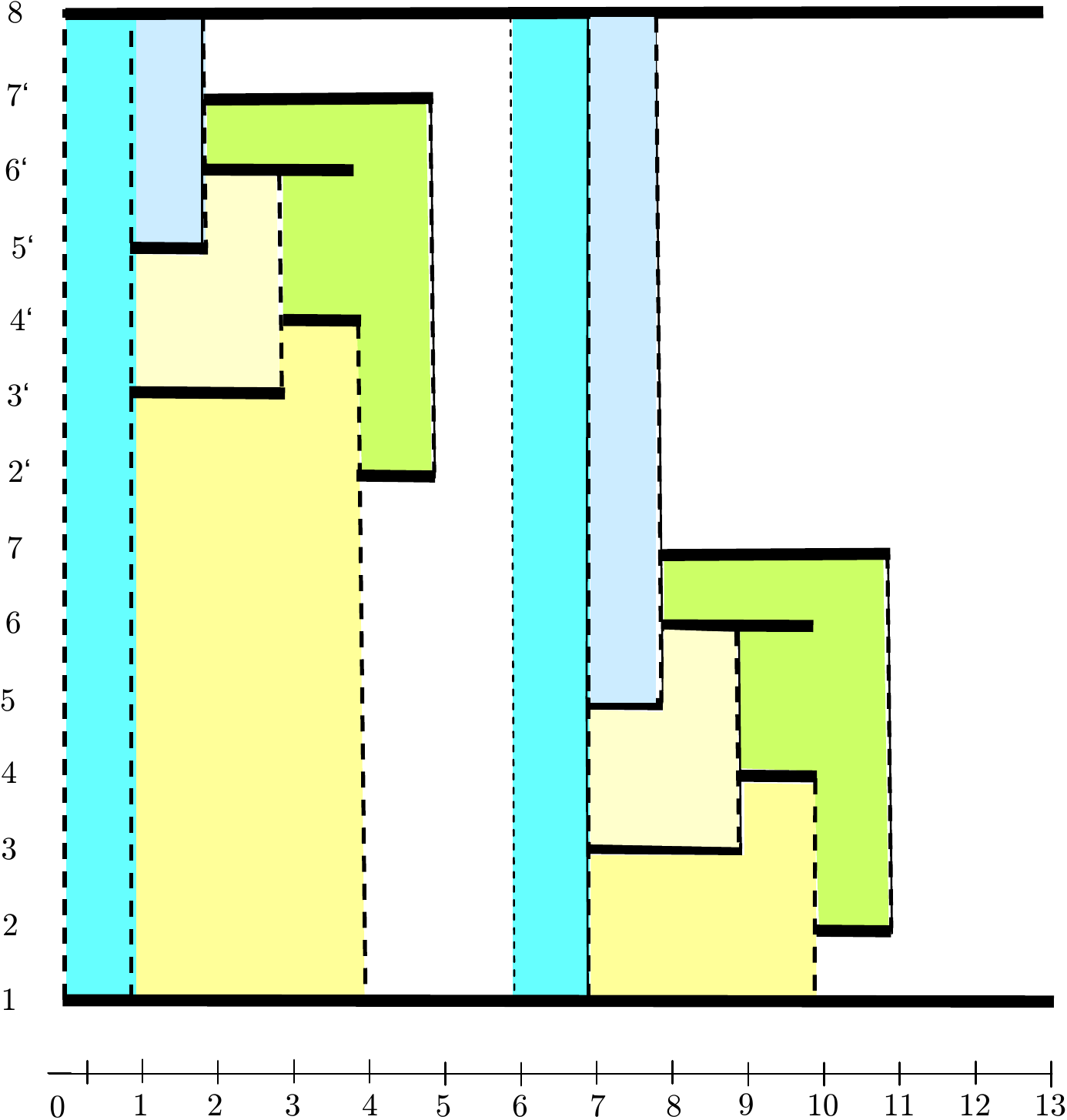}
     \caption{A visibility representation of $DXW_{\square}$ from Fig.~\ref{fig:XWdouble}
     with dashed lines of sight and colored faces.}
     \label{fig:XWplanarVis}
   \end{center}
\end{figure}

\begin{figure}
   \begin{center}
     \includegraphics[scale=0.5]{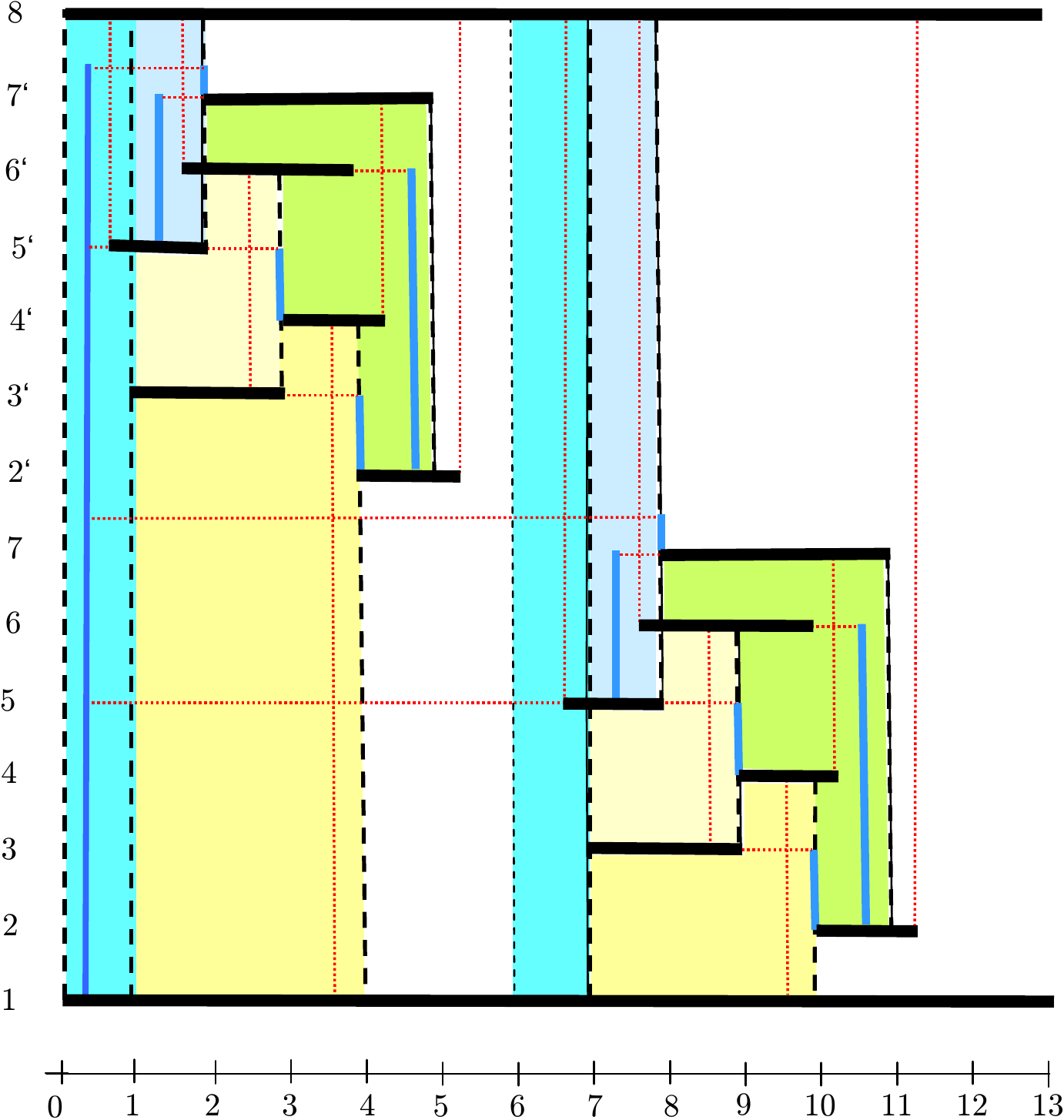}
     \caption{The \textsf{T}-shaped visibility representation of graph $DXW$ from Fig.~\ref{fig:XWdouble}
     by T-DRAWER (with pylons in blue).}
     \label{fig:XWTVis}
   \end{center}
\end{figure}

\section{General Shape Visibility Graphs} \label{sect:general}

There is a natural ordering relation $\sigma < \sigma'$ between
shapes     if  $\sigma$ is a restriction of $\sigma'$ including
rotation and flip. For example, $\textsf{I} < \textsf{L} <
\textsf{F} < \textsf{E}$ and $\textsf{I}< \textsf{T} < \textsf{E} <
rake < caterpillar$. Clearly, every $\sigma$-shape visibility graph
is a $\sigma'$-shape visibility graph if   $\sigma < \sigma'$.
However, it is unclear whether different shapes imply different
classes of shape visibility graphs.
 Moreover, shapes with
cycles, such as \textsf{O} or \textsf{B} are not really useful for
shape visibility representations, since a cycle corresponds to an
articulation vertex.

For shape visibility graphs we can  state:

\begin{lemma} \label{lem:thickness}
Every   shape visibility graph  has thickness two.
\end{lemma}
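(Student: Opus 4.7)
The plan is to exploit the fact that every line of sight in a shape visibility representation is either horizontal or vertical, and decompose the edge set accordingly. Given a shape visibility representation of $G$, partition $E$ into $E_H$, the edges realized by a horizontal line of sight, and $E_V$, those realized by a vertical line of sight (breaking ties arbitrarily if some edge can be realized both ways). It suffices to show that each of the spanning subgraphs $G_H=(V,E_H)$ and $G_V=(V,E_V)$ is planar; by symmetry we do this only for $G_H$.

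To prove $G_H$ is planar, I would exhibit a planar drawing of it directly from the representation. Draw each vertex $v$ as its orthogonal polygon $\sigma(v)$ and draw each edge of $E_H$ as the horizontal segment that realizes its line of sight. The shapes are pairwise disjoint by definition, and each horizontal sight segment of $E_H$ has its interior in the free plane, so a segment never meets a shape other than at its two endpoints. The only remaining potential obstacle is two horizontal sight segments crossing each other. But two horizontal segments with distinct $y$-coordinates are disjoint, and two horizontal segments at the same $y$-coordinate either lie on disjoint $x$-intervals or would have to overlap through a shape, which is impossible since sight lines cannot penetrate shapes. Hence no two edges of this drawing cross.

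Since each shape $\sigma(v)$ is a connected orthogonal polygon, contracting $\sigma(v)$ to an interior point $p_v$ and rerouting each incident horizontal segment along $\sigma(v)$ from its contact point to $p_v$ (staying inside $\sigma(v)$, which is simply connected) yields a genuine planar drawing of $G_H$ with the vertices as points and the edges as Jordan arcs. Therefore $G_H$ is planar, and analogously $G_V$ is planar, so $E=E_H\cup E_V$ decomposes $G$ into two planar subgraphs and the thickness of $G$ is at most two.

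The only step requiring care is the claim that two horizontal sight lines cannot cross; the main obstacle is making sure this argument does not fail when multiple shapes meet a common horizontal line. I would handle this by arguing that if two horizontal segments did share a point in their interiors, that point would lie in the open plane between shapes, forcing the two segments to be collinear; and if collinear, the definition of a sight line (no shape in between endpoints) forbids them from overlapping unless they coincide, which would just identify them as the same edge representation. This confirms that the constructed drawings are planar and completes the thickness bound.
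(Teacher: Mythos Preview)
Your proposal is correct and follows exactly the paper's approach: split the edges into those realized by horizontal versus vertical lines of sight and observe that each part is planar. The paper's own proof is a single sentence (``The subgraph induced by the horizontal (vertical) lines of sight is planar''); you have simply supplied the details that justify that sentence, in particular the argument that collinear horizontal sight segments cannot overlap because a line of sight cannot pass through an intervening shape.
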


\begin{proof}
The subgraph induced by the horizontal (vertical) lines of sight is
planar.
\end{proof}

\begin{corollary}
  $\sigma$-visibility graphs of size $n$ have at
most $6n-12$ edges and there are $\sigma$-visibility graphs with
$6n-20$ edges for every shape $\sigma$.
\end{corollary}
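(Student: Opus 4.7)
The plan is to combine the just-proved thickness-two lemma with the classical rectangle-visibility density result.

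For the upper bound, I would argue as follows. By Lemma \ref{lem:thickness}, any $\sigma$-visibility graph $G$ has thickness two, so its edge set decomposes into two planar subgraphs $G_1$ and $G_2$ on the same vertex set. Applying Euler's formula to each planar subgraph gives $|E(G_i)| \leq 3n-6$, so $|E(G)| = |E(G_1)| + |E(G_2)| \leq 6n-12$. This is the routine step.

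For the lower bound, the key observation is that the shape ordering $\sigma < \sigma'$ introduced at the start of this section gives $\textsf{I} \leq \sigma$ for every shape $\sigma$ considered here, since a rectangle is a degenerate instance (a single edge of the underlying tree $\tau$) of any richer shape obtained by adding further branches. Hence every rectangle visibility representation is, in particular, a $\sigma$-shape visibility representation for any $\sigma$, and every rectangle visibility graph is a $\sigma$-visibility graph. Now I would invoke the result of Hutchinson et al.\ \cite{hsv-rstg-99} cited in the introduction, which exhibits rectangle visibility graphs with exactly $6n-20$ edges for every $n \geq 8$. These graphs then serve as the desired witnesses.

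The main (minor) obstacle is making the inclusion $\textsf{I} \leq \sigma$ precise for the definition of $\sigma$-shape given earlier: the author allows rectangles of width $w>0$ and height $h>0$ for each tree-edge, and admits reflections/rotations, so one must check that a bar visibility (flat rectangle) representation can be read as a $\sigma$-shape representation in which every vertex-polygon degenerates to its horizontal bar component. Since the paper's definition explicitly says polygons of the same shape differ only in the lengths and widths of their horizontal/vertical pieces, taking these pieces to have the minimum allowed size makes the $\sigma$-polygon coincide with a rectangle, and the two lines of sight remain valid. No further calculation is needed, and the two inequalities combine to give the stated bounds.
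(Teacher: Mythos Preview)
Your proposal is correct and follows essentially the same approach as the paper: the upper bound via Lemma~\ref{lem:thickness} and Euler's formula, and the lower bound via the Hutchinson et al.\ rectangle visibility graphs combined with the inclusion $\textsf{I} \leq \sigma$. The paper's own justification is even terser than yours; it simply cites the thickness lemma and \cite{hsv-rstg-99}, relying on the earlier remark that ``every $\sigma$-shape visibility graph is a $\sigma'$-shape visibility graph if $\sigma < \sigma'$'' rather than spelling out the degeneration argument you give in your last paragraph.
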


The upper bound follows from Lemma \ref{lem:thickness} and the lower
bound has been proved by Hutchinson et al.~\cite{hsv-rstg-99} for
rectangle visibility graphs. The exact  bound are unclear for all
shapes except rectangles.\\

The extended wheel graph $XW_6$ even admits a rectangle visibility
representation, and so do all wheel graphs $XW_{2k}$ with $k \geq
3$. An extended wheel graph consists of a cycle of vertices $v_1,
\ldots, v_{2k}$ of vertices of degree six so that each $v_i$ is
adjacent to its next and next but one vertex in cyclic order. In
addition, there are two poles $p$ and $q$ that are adjacent to all
$v_i$ (but there is no edge $\{p,q\})$. Extended wheel graphs  play
a prominent role for 1-planar graphs  with $4n-8$
edges \cite{b-ro1plt-16, s-s1pg-86, s-rm1pg-10}.\\

We close with some open problems:\\

\noindent\textbf{ Conjecture}:
\begin{enumerate}
  \item Every   1-planar graph with $4n-8$ edges is a rectangle visibility
  graph.
  \item There are \textsf{L}-visibility graphs that are not rectangle
visibility graphs (\textsf{I}-shape) and there are
\textsf{T}-visibility graphs that are not \textsf{L}-visibility
graphs.
\end{enumerate}

\section{Acknowledgement}
I wish to thank Christian Bachmaier for his useful comments and
suggestions.

\bibliographystyle{abbrv}
\bibliography{brandybibV8}

\end{document}